
\documentclass[runningheads]{llncs}
 
\usepackage{pdfpages}
\usepackage{stmaryrd}
\usepackage{comment}
\usepackage[pdfencoding=auto]{hyperref}
\usepackage{times}

\usepackage[ruled,vlined,linesnumbered]{algorithm2e}

\usepackage{graphicx,booktabs,listings,caption,subcaption,tikz,amsmath,amssymb,placeins,subfiles,tabto}
\usepackage{xcolor,colortbl}
\usepackage{multirow,float}

\usepackage{mathtools,centernot}
\usepackage{tikz, trimclip}

\usepackage{cleveref}

%

\usepackage{pifont}

\usepackage{lineno}



\newcommand{\tent}[1]{\textcolor{blue}{#1}}


\definecolor{azure}{rgb}{0.0, 0.5, 1.0}
\definecolor{cadmiumorange}{rgb}{0.93, 0.53, 0.18}
\definecolor{cadmiumgreen}{rgb}{0.0, 0.42, 0.24}
\definecolor{cadmiumred}{rgb}{0.89, 0.0, 0.13}
\definecolor{cadmiumyellow}{rgb}{1.0, 0.96, 0.0}
\definecolor{applegreen}{rgb}{0.55, 0.71, 0.0}

%
%

\newcommand{\co}[1]{\textcolor{cadmiumorange}{\underline{#1\mathstrut}}}


\newcolumntype{L}{>{$}c<{$}}
\setlength{\tabcolsep}{3pt}





\newcommand{\mode}              {\mu}
\newcommand{\F}                 {\mathcal{F}}
\newcommand{\M}                 {\mathcal{M}}

\newcommand{\R}                 {\mathcal{R}}
\newcommand{\Nat}               {\mathbb{N}}
\newcommand{\Int}               {\mathbb{Z}}
\newcommand{\supp}              {\mathit{supp}}
\newcommand{\barp}              {\bar{p}}
\newcommand{\nunet}             {$\nu$-net}
\newcommand{\nunets}             {$\nu$-nets}
\newcommand{\nuN}               {\nu\textit{-}\mathit{N}}

\newcommand{\cutmarking}        {\widetilde{m}}
\newcommand{\compal}            {{\widetilde{\gamma}}}

\newcommand{\matr}[1]           {\mathbf{#1}}

\newcommand{\proj}              {{\upharpoonright}}

\DeclareMathOperator{\viol}{viol}
\DeclareMathOperator{\alignm}{align}

\DeclareMathOperator{\act}{activity}

\DeclareMathOperator{\ecase}{case}
\DeclareMathOperator{\Res}{Res}

\newcommand{\clm}             {\downarrow}
\newcommand{\rls}             {\uparrow}

\newcommand{\set}[1]{\{ #1 \}}
\newcommand{\sset}[1]{\left\{ #1 \right\}}
\newcommand{\mset}[1]{[ #1 ]}

\newcommand{\seq}[1]{\langle #1 \rangle}

\newcommand{\spair}[2]{\left(#1, #2 \right)}

\newcommand{\interval}[2]          {[#1, #2 ] }
\newcommand{\intervalopen}[2]      {(#1, #2)}
\newcommand{\intervalhalfopenl}[2] {(#1, #2]}
\newcommand{\intervalhalfopenr}[2] {[#1, #2)}
\newcommand{\prefix}[1]            {\intervalhalfopenl{\bot}{#1}}
\newcommand{\prefixopen}[1]        {\intervalopen{\bot}{#1}}
\newcommand{\postfix}[1]           {\intervalhalfopenr{#1}{\top}}
\newcommand{\postfixopen}[1]       {\intervalopen{#1}{\top}}

\newcommand{\vect}[1]     {\mathbf{#1}}

\newcommand{\pre}[1]             {\prescript{\bullet}{}{#1}}
\newcommand{\post}[1]            {#1^\bullet}

\newcommand{\idc}               {c}
\newcommand{\idr}               {\rho}
\newcommand{\Idc}               {\textnormal{Id}_c}
\newcommand{\Idr}               {\textnormal{Id}_r}
\newcommand{\IdR}               {\textnormal{Id}_R}

\newcommand{\e}                 {\varepsilon}

\newcommand{\ie}{i.e., }

\newcommand{\eg}{e.g., }


\newcommand{\Var}               {\mathit{Var}}
\newcommand{\Id}                {\mathit{Id}}

\newcommand{\arxiv}{1}

\begin{document}

\title{
Exact and Approximated Log Alignments for Processes with Inter-case Dependencies
}

\titlerunning{Exact and Approximated Log Alignments for Processes with Inter-case Dependencies}

\author{Dominique Sommers
\and
Natalia Sidorova
\and
Boudewijn van Dongen
}
\authorrunning{D. Sommers et al.}

\institute{Department of Mathematics and Computer Science\\Eindhoven University of Technology, Eindhoven, the Netherlands\\
\email{\{\href{mailto:d.sommers@tue.nl}{d.sommers},\href{mailto:n.sidorova@tue.nl}{n.sidorova},\href{mailto:b.f.v.dongen@tue.nl}{b.f.v.dongen}\}@tue.nl}}

\maketitle 

\begin{abstract}
The execution of different cases of a process is often restricted by inter-case dependencies through \eg queueing or shared resources. Various high-level Petri net formalisms have been proposed that are able to model and analyze coevolving cases. In this paper, we focus on a formalism tailored to conformance checking through alignments, which introduces challenges related to constraints the model should put on interacting process instances and on resource instances and their roles. We formulate requirements for modeling and analyzing resource-constrained processes, compare several Petri net extensions that allow for incorporating inter-case constraints. We argue that the Resource Constrained \nunet~is an appropriate formalism to be used the context of conformance checking, which traditionally aligns cases individually failing to expose deviations on inter-case dependencies. We provide formal mathematical foundations of the globally aligned event log based on theory of partially ordered sets and propose an approximation technique based on the composition of individually aligned cases that resolves inter-case violations locally.

\keywords{Petri nets \and Conformance checking \and Inter-case dependencies \and Shared resources.}
\end{abstract}

\section{Introduction}\label{sec:introduction}
Event logs record which activity is executed at which moment of time, and additionally they often include indications which resources were involved in which activity, mentioning the exact person(s) or machine(s). The availability of such event logs enables the use of conformance checking for resource-constrained processes, analyzing not only the single instance control-flow perspective, but also checking whether and where the actual process behavior recorded in an event log deviates from the resource constraints prescribed by a process model.

Process models, and specifically Petri nets with their precise semantics, are often used to describe and reason about the execution of a process. In many approaches, a process model considers a process instance (a case) in isolation from other cases~\cite{van1998application}. 
In practice, however, a process instance is usually subject to interaction with other cases and/or resources, whose availability puts additional constraints on the process execution. 
In order to expose workflow deviations caused by inter-case dependencies, it is crucial to use models considering multiple cases simultaneously. 

There are several approaches to modeling and analysis of processes with inter-case dependencies. 
In~\cite{barkaoui1998structural} and~\cite{van2006resource}, Petri nets are extended with resources to model availability of durable resources, with multiple cases competing by claiming and releasing these \emph{shared} resources. 
To distinguish the cases, $\nu$-Petri nets~\cite{rosa2010decision} incorporate name creation and management as a minimal extension to classical Petri nets, with the advantage that coverability and termination are still decidable, opposed to more advanced Petri net extensions.
The functionality of $\nu$-Petri nets is inherited in other extensions such as Catalog Petri nets~\cite{ghilardi2020petri}, synchronizing proclet models~\cite{fahland2019describing}, resource and instance-aware workflow nets (RIAW-nets)~\cite{montali2016model}, DB-nets~\cite{montali2017db} and resource constrained $\nu$-Petri nets~\cite{sommers2022aligning}, all with the ability to handle multiple cases simultaneously. For the latter, the cases are assumed to follow the same process, interacting via (abstract) shared resources in a one-to-many relation, \ie a resource instance can be claimed by one case at a time. More sophisticated extensions allow for cases from various perspectives with many-to-many interactions, via \eg concepts from databases, shared resources and proclet channels. This may impose, however, problems of undecidability during conformance checking, which we discuss in this work.

Many conformance checking techniques use \emph{alignments} to expose where the behavior recorded in a log and the model agree, which activities prescribed by the model are missing in the log and which log activities should not be performed according to the model~\cite{carmona2018conformance,van2012replaying}. The usual focus is on the control flow of the process. In more advanced techniques~\cite{alizadeh2018linking,de2013aligning,mannhardt2016balanced,mehr2021detecting}, data and/or resource information is additionally incorporated in the alignments by considering these perspective only after the control flow~\cite{de2013aligning}, by balancing the different perspectives in a customizable manner~\cite{mannhardt2016balanced} or by considering all perspectives at once~\cite{mehr2021detecting}. These three types of techniques operate on a case-by-case basis, which can lead to misleading results in case of shared resources, \eg when multiple cases claim the same resource simultaneously.

In our previous work we considered the execution of all process instances by aligning the complete event log to a resource constrained $\nu$-Petri nets~\cite{sommers2022aligning}. In this paper, we present our further steps: (1) We compare how the existing Petri net extensions support modeling and analysis of processes with inter-case dependencies by formulating the requirements to such models, and we argue that \nunets~are an appropriate formalism. (2) We employ the poset theory to provide mathematical foundations for aligning the complete event log and exposing deviations of inter-case dependencies; (3) We propose an approximation method for computing optimal alignments in practice, which tackles the limitation of the computational efficiency when computing the complete event log alignment. The approximation method is based on composing alignments for isolated cases first and then resolving inter-case conflicts and deviations in the log locally.

The paper is organized as follows. In Section~\ref{sec:preliminaries} we introduce basic concepts of the poset theory, Petri nets and event logs. In Section~\ref{sec:resource_nets_modeling} we compare different Petri net extensions. We provide the mathematical foundations of the complete event log alignment in Section~\ref{sec:log_alignments}. Section~\ref{sec:approximation} presents the approximation method for computing alignments. We discuss implications of our work in Section~\ref{sec:conclusion}.

\section{Preliminaries}\label{sec:preliminaries}
In this section, we introduce basic concepts related to Petri nets and event logs and present the notations that we will use throughout the paper.

\subsection{Multisets and posets}
We start with definitions and notation regarding multisets and partially ordered sets.

\begin{definition}{(Multiset)}
A \emph{multiset} $m$ over a set $X$ is $m: X \rightarrow \Nat$. $X^\oplus$ denotes the set of all multisets over $X$. We define the support $\supp(m)$ of a multiset $m$ as the set $\{x \in X \mid m(x) > 0\}$. We list elements of the multiset as $[m(x) \cdot x \mid x \in X]$, and write $|x|$ for $m(x)$, when it is clear from context which multiset it concerns.

For two multisets $m_1,m_2$ over $X$, we write $m_1 \leq m_2$ if $\forall_{x \in X} m_1(x) \leq m_2(x)$, and $m_1 < m_2$ if $m_1 \leq m_2 \wedge m_1 \neq m_2$. We define $m_1 + m_2 = [(m_1(x) + m_2(x))\cdot x \mid x \in X]$, and $m_1 - m_2 = [\max(0, m_1(x) - m_2(x))\cdot x \mid x \in X]$ for $m_1 \geq m_2$. 

Furthermore, $m_1 \sqcup~ m_2 = [\max(m_1(x), m_2(x))\cdot x \mid x \in X]$, $m_1 \sqcap m_2 = [\min(m_1(x), m_2(x))\cdot x \mid x \in X]$.
\end{definition}
In some cases, we consider multisets over a set $X$ as vectors of length $|X|$, assuming an arbitrary but fixed ordering of elements of $X$.

\begin{definition}{(Partial order, Partially ordered set, Antichains)}
A \emph{partially ordered set} (\emph{poset}) $X = (\bar X, \prec_X)$ is a pair of a set $\bar X$ and a \emph{partial order} $\prec_X \subseteq X \times X$. 
We overload the notation and write $x \in X$ if $x \in \bar X$. For $x,y \in X$, we write $x \|_X y$ if $x \nprec y \wedge y \nprec x$ and $x \preceq y$ if $x \prec y \vee x = y$.

Given $\prec_X$, we define $\prec_X^+$ to be the smallest transitively closed relation containing $\prec_X$. Thus $\prec_X^+$ is a partial order with $\prec_X \subseteq \prec_X^+$. 

We extend the standard set operations of union, intersection, difference and subsets to posets: for any two posets $X$ and $Y$, $X \circ Y = (\bar{X} \circ \bar{Y}, (\prec_X \circ \prec_Y)^+)$, with $\circ \in \set{\cup, \cap, \setminus}$ and $Y \subseteq X$ iff $\bar{Y} \subseteq \bar{X}$ and $\prec_Y = \prec_X \cap (\bar{Y} \times \bar{Y})$.

A poset $A$ is an \emph{antichain} if no elements of $A$ are comparable, \ie  $\forall_{x,y \in A}~x \| y$. For poset $X$, $\mathcal{A}(X)$ denotes the set of all \emph{antichains} $A\subseteq X$, and  $\mathcal{A}^+(X)$ is the set of all \emph{maximal antichains}: $\mathcal{A}^+(X)=\set{A \mid A \in \mathcal{A}(X), \forall_{B \in \mathcal{A}(X)}~B \subseteq A \implies B = A}$.

Two special maximal antichains are the \emph{minimum} and \emph{maximum} elements of $X$, defined by $\min(X) = \set{x \mid x \in X, \forall_{y \in X} y \nprec x} \in \mathcal{A}^+(X)$ and $\max(X) = \set{x \mid x \in X, \forall_{y \in X} x \nprec y} \in \mathcal{A}^+(X)$.

We define $X^< = \set{ (\bar{Y}, \prec_Y) \mid \bar{Y} = \bar{X}, \prec_X \subseteq \prec_Y, \forall_{a,b \in Y, a \neq b}~a \not\|_Y b }$ to be the set of totally ordered permutations of $X$ that respect the partial order.
\end{definition}

\begin{definition}{(Interval, prefix and postfix in a poset)}\label{def:poset_interval}
With a poset $X$ and two antichains $A,B \in \mathcal{A}(X)$, the \emph{closed jhkcbvinterval} from $A$ to $B$ is the subposet defined as follows: $\interval{A}{B} = (\overline{AB}, \prec_X \cap (\overline{AB}\times\overline{AB}))$ with $\overline{AB} = \set{x \mid x \in X, A \preceq x \preceq B}$, 
and the half open and open intervals:  $\intervalhalfopenl{A}{B} = \interval{A}{B}\setminus A$,  $\intervalhalfopenr{A}{B} =  \interval{A}{B}\setminus B$ and $\intervalopen{A}{B} = \intervalhalfopenr{A}{B} \setminus A$.

Artificial minimal and maximal elements are denoted as $\bot$ and $\top$ respectively, \ie $\forall_{x \in X} \bot \prec x \prec \top$. $\prefix{A}$, $\prefixopen{A}$, $\postfix{A}$ $\postfixopen{A}$ denote the corresponding \emph{prefixes and postfixes} of an antichain $A \in \mathcal{A}(X)$ in $X$.
\end{definition}

\subsection{Petri nets}
Petri nets can be used as a tool for the representation, validation and verification of workflow processes to provide insights in how a process behaves~\cite{peterson1981petri}.

\begin{definition}{(Labeled Petri nets, Pre-set, Post-set)}
A \emph{labeled Petri net}~\cite{murata1989petri} is a tuple $N = (P, T, \F, \ell)$, with sets of places and transitions $P$ and $T$, respectively, such that $P \cap T = \emptyset$, and a multiset of arcs $\F: (P \times T) \cup (T \times P) \rightarrow \Nat$ defining the flow of the net. $\ell: T \rightarrow \Sigma^\tau = \Sigma \cup \{\tau\}$ is a \emph{labeling} function, assigning each transition $t$ a label $\ell(t)$ from alphabet $\Sigma$ or $\ell(u)=\tau$ for silent transitions.

We assume that the \emph{intersection, union and subsets} are only defined for two labeled Petri nets $N_1$, $N_2$ where $\forall_{t \in T_1 \cap T_2} \ell_1(t) = \ell_2(t)$.



Given an element $x \in P \cup T$, its \emph{pre- and post-set} $\pre{x}$ ($\post{x}$) are multisets defined by $\pre{x} = [\F(y, x) \cdot y \mid y \in P \cup T]$ and $\post{x} = [\F(x, y) \cdot y \mid y \in P \cup T]$ resp.
\end{definition}

\begin{definition}{(Marking, Enabling and firing of transitions, Reachable markings)}\label{def:marking}
A \emph{marking} $m \in P^\oplus$ of a (labeled) Petri net $N = (P, T, \F, \ell)$ assigns how many tokens each place contains and defines the state of $N$.

With $m$ and $N$, a transition $t \in T$ is \emph{enabled} for firing iff $m \geq \pre{t}$. We denote the \emph{firing} of $t$ by $m \xrightarrow{t} m'$, where $m'$ is the resulting marking after firing $t$ and is defined by $m' = m - \pre{t} + \post{t}$. For a transition sequence $\sigma = \seq{t_1,\dots,t_n}$ we write $m \xrightarrow{\sigma} m'$ to denote the consecutive firing of $t_1$ to $t_n$. We say that $m'$ is reachable from $m$ and write $m \xrightarrow{*} m'$ if there is some $\sigma \in T^*$ such that $m \xrightarrow{\sigma} m'$.

$\M(N)=P^\oplus$ and it denotes the set of all markings in net $N$ and $\R(N,m)$ the \emph{set of  markings reachable} in net $N$ from marking $m$.
\end{definition}

\begin{definition}{(Place invariant)}
Let $N = (P, T, \F, \ell)$ be a Petri net. A \emph{place invariant} \cite{lautenbach1975liveness} is a row vector $I: \bf{P} \rightarrow \mathbb{Q}$ such that $I \cdot \bf{F} = 0$, with $\bf{P}$ and $\bf{F}$ vector representations of $P$ and $\F$.  We denote the set of all place invariants as $\mathcal{I}_N$, which is a linear subspace of $\mathbb{Q}^P$. 
\end{definition}

The main property of a place invariant $I$ in a net $N$ with initial marking $m_i$ is that $\forall_{m_1, m_2 \in \R(N,m_i)} I \cdot m_1 = I \cdot m_2$.

\begin{definition}{(Net system, Execution poset and sequence, Language)}
A \emph{net system} is a tuple $SN = (N, m_i, m_f)$, where $N$ is a (labeled) Petri net, and $m_i$ and $m_f$ are respectively the initial and final marking. An \emph{execution sequence} in a net system $SN = (N, m_i, m_f)$ is a firing sequence from $m_i$ to $m_f$. Additionally, an \emph{execution poset} is a poset of transition firings, where each totally ordered permutation is a firing sequence. The \emph{language} of a net system $SN$ is the set of all execution sequences in $SN$.
\end{definition}

\subsection{Event logs}
An event log records activity executions as events including at least the occurred activity, the time of occurrence and the case identifier of the corresponding case.
Often resources are also recorded as event attributes, \eg the actors executing the action. It is generally known beforehand in which activities specific \emph{resource roles} $R$ are involved and which resource instances $\Idr$ are involved in the process for each role $r \in R$. We assume that each resource has only one role (function) allowing to execute a predefined number of tasks, and therefore define the set  $\IdR$ of resource instances of all roles as the disjoint union of resource instance sets of roles: $\IdR = \uplus_{r \in R} \Idr$. A \emph{resource instance} $\idr \in \IdR$ with role $r \in R$ is equipped with capacity, making $\Idr$ and $\IdR$ both multisets.

\begin{definition}{(Event, Event log, Trace)}\label{def:event_log}
An \emph{event} $e$ is a tuple $(a, t, c, \IdR')$, with an \emph{activity} $a = \act(e) \in \Sigma$, a \emph{timestamp} $t = time(e) \in \mathbb{R}$, a \emph{case identifier} $c = \ecase(e) \in \Idc$ and a multiset of resource instances $\IdR' = \Res(e) \leq \IdR$. Such an event represents that activity $a$ occurred at timestamp $t$ for case $c$ and is executed by resource instances from $\IdR'$ belonging to possibly different resource roles.

An \emph{event log} $L$ is a set of events with partial order $\prec_L$ that respects the chronological order of the events, \ie $\forall_{e_1, e_2 \in L} time(e_1) < time(e_2) \Longrightarrow e_2 \not\prec_L e_1$.
An event log can be partitioned into \emph{traces}, defined as projections \eg on the case identifiers or on the resources names. For every $c \in \Idc$, $L_c$ denotes a trace projected on the case identifier $c$ defined by $L_c = (\set{e \mid e \in L, \ecase(e) = c}, \prec_{L_c})$ with $\prec_{L_c} = \set{ (e, e') \mid (e, e') \in \prec_L, \ecase(e) = \ecase(e') = c }$.
\end{definition}

Alternatively, we write $\seq{e_1, e_2, \cdots}$ for an event log which is totally ordered, and $a^{\IdR^{'}}$ and $\co{a}^{\IdR^{'}}$ for events where the case is identified by the activity color (and bar position) and the time of occurrence is abstracted away from.

For a (labeled) Petri net modeling a process, the transitions' names or labels correspond to the activity names found in the recorded event log.

\section{Modeling, analysis and simulation of case handling systems with inter-case dependencies}\label{sec:resource_nets_modeling}
A classical Petri net models a process execution using transition firings and the corresponding changes of markings without making distinctions between different cases on which the modeled system works simultaneously. To create a case view, Workflow nets~\cite{van2016data} model processes from the perspective of a single case. Systems in which cases interact with each other, \eg by queueing or sharing resources, need to be modeled in a different way. We show from a modeling point how this boils down to multiple cases competing over shared tokens representing resources in a Petri net, which requires an extension on the formalism of the classical Petri nets. In Sec.~\ref{sec:req}, we motivate the requirements by providing examples, after which, in Sec.~\ref{sec:existing_extensions}, we discuss whether existing Petri net extensions satisfy these requirements. We end, in Sec.~\ref{sec:rc_nunet} by proposing a \emph{minimal extension} based on $\nu$-Petri nets~\cite{rosa2010decision} that meets each requirement for simulation and analysis of resource-constrained processes.

\subsection{Requirements imposed by inter-case dependencies}\label{sec:req}
When modeling systems with inter-case dependencies, \ie shared resources, simultaneous cases can interfere in each other's processing via the resources, causing inter-case dependencies. To model, simulate and analyze such behavior, the cases and resources, represented as tokens in a Petri net, should be handled together and simultaneously in the process model. This introduces the need for case (R1) and resource isolation (R2) as well as durable resources (R3) and case-resource correlations (R4), which regular Petri nets are not capable of. For analysis, like computing alignments (see Section~\ref{sec:log_alignments}), non-invertible functions can cause state-space explosions (R5). We show for each requirement, when not satisfied, how simulation and/or analysis concerning multiple simultaneous cases fails:
\begin{figure}[t]
\centering
\includegraphics[scale=0.5]{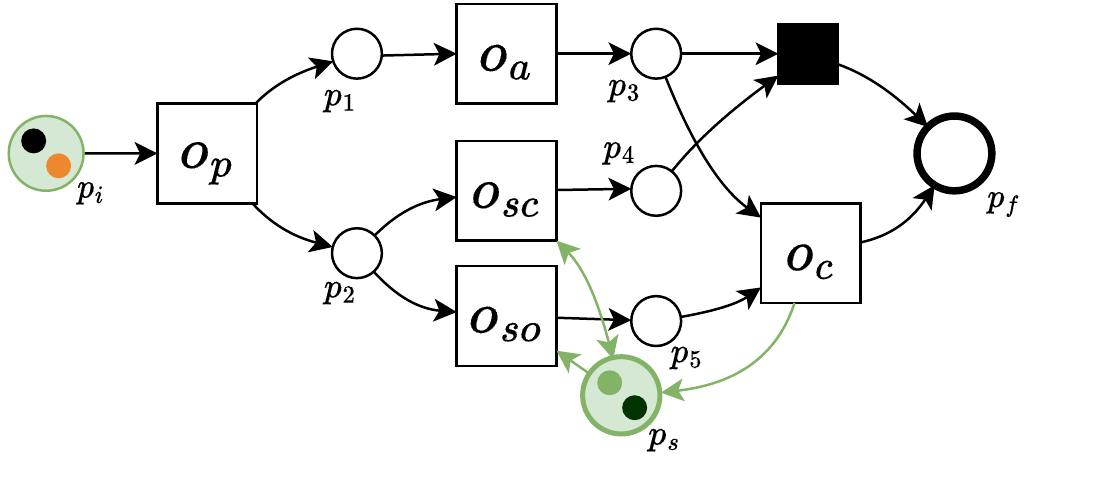}
    \caption{Example Petri net $N_1$ to argue the requirements, with token colors denoting different instances.}
    \label{fig:requirements}
\end{figure}
\begin{itemize}
    \item[R1] \emph{Distinguishable cases} are required when dealing with multiple cases. Tokens involved in a firing of a transition should not belong to different cases, unless case batching is used. Mixing tokens from different cases, possible in classical Petri nets, can potentially cause model behavior that is not possible in the modeled system: Suppose we have a simple operation process modeled by Petri net $N_1$, shown in Fig.~\ref{fig:requirements}, where a patient undergoes an operation involving the activities of preparation $(o_p)$, assistance $(o_a)$, closed surgery $(o_{sc})$ and open surgery $(o_{so})$ which is followed by closeup $(o_c)$. We assume case tokens to be indistinguishable. 
    The language of $(N_1,[p_i, 2p_s],[p_f, 2p_s])$ is $\{\seq{o_p, o_a, o_{sc}}, \seq{o_p, o_{sc}, o_a}, \seq{o_p, o_a, o_{so}, o_c},$\\$\seq{o_p, o_{so}, o_a, o_c}\}$ 
    and the language of the same net processing two cases with sufficient resources has to consist of all possible interleaving of two traces belonging to single cases. However, $\set{\seq{ o_p, o_a, o_{sc}, \co{o_p}, \co{o_a}, \co{o_{so}}, \mathbf{o_c} }}$ is included in the language of $(N_1,[2p_i,2p_s],[2p_f,2p_s])$, which is impossible to obtain by an interleaving of two single cases, as $o_c$ is never enabled after $o_{sc}$ fires. Here and later we use underlined symbols when referring to the second case in examples.
    From now on, we assume case tokens are distinguishable and we have $m_i(p_i) = (c, \co{c})$;
    
    \item[R2] \emph{Distinguishable resources} are required when resource instances are uniquely identifiable. If the tokens in $p_s$ are indistinguishable,   $\seq{\dots, o_{so}^{\set{x}}, \co{o_{sc}}^{\set{x}}, o_{c}^{\set{x}}}$ belongs to the language of $(N_1,[2p_i,2p_s],[2p_f,2p_s])$. However, resource instance $x$ can only be claimed by the second case after it has been released by the first case (by firing transition $o_c$), hence it should not be included in the language. From now on, we assume resource tokens are distinguishable and we have $m_i(p_s) = (x, y)$;
    \item[R3] Resources are required to be \emph{durable} when having a variable number of cases in the system simultaneously. In $N_1$, the resource instances in $p_s$ are modeled to be durable, since these instances are always released after being claimed. However, were arc $(o_c, p_s)$ to be removed, problems arise when observed behavior concerns more than two cases, since after transition $o_{so}$ fired twice, it is never enabled again, causing a deadlock;
    \item[R4] Capturing \emph{case-resource correlation} is required when dealing with multiple distinguishable cases and resources in order to keep track of which resource handles which case. Without it, the language of $(N_1,[2p_i,2p_s],[2p_f,2p_s])$ includes \eg $\seq{\dots, o_{so}^{\set{x}}, \co{o_{so}}^{\set{y}}, \co{o_c}^{\set{x}}, o_c^{\set{y}}}$, which is undesirable as resources $x$ and $y$ have switched cases after transition $o_{so}$ is fired twice. Case-resource correlation should ensure, in this case, that transition $o_c$ can only be fired using the same resource as was claimed by firing transition $o_{so}$;
    \item[R5] Operations on token values (\eg guards, arc inscriptions) should be \emph{invertible} and computable when aligning observed and modeled behavior in order to keep the problem decidable. Consider \eg that patients enter the process by their name and birthdate $v$, which is transformed to an identifier $c$ in the first transition by an operation $f(v)$ on $(o_p)$. When activity $o_p$ is missing for a patient, it is undecidable which value $v$ should be inserted for the firing of $o_p$ when $f$ is not invertible.
\end{itemize}

\subsection{Existing Petri net extensions}\label{sec:existing_extensions}
Several extensions on Petri nets have been proposed focusing on multi-case and/or multi-resource processes able to handle (some) inter-case dependencies. We go over each extension, describing how they satisfy (and violate) requirements listed in Sec.~\ref{sec:req}. We propose an extension, which combines concepts of the described extensions and satisfies all requirements.

\emph{Resource constrained workflow nets} (RCWF-nets)~\cite{van2006resource} are Petri nets extended with resource constraints, where resources are durable units: they are claimed and then released again (R3). They define structural criteria for its correctness.
\begin{definition}{(Resource-constrained workflow net~\cite{van2006resource})} \label{def:RCWF}
Let $R$ be a set of \emph{resource roles}. A net system $N = (P_p \uplus P_r, T, \F_p \uplus \F_r, m_i, m_f)$ is a \emph{resource-constrained workflow net (RCWF-net)} with the set $P_p$ of \emph{production places} and the set $P_r = \set{p_r \mid r \in R}$ of \emph{resource places} iff 
\begin{itemize}
    \item $\F_p: (P_p \times T) \cup (T \times P_p) \rightarrow \mathbb{N}$ and $\F_r: (P_r \times T) \cup (T \times P_r) \rightarrow \mathbb{N}$;
    \item $N_p = (P_p, T, \F_p, [m_i(p) \cdot p \mid p \in P_p], [m_f(p) \cdot p \mid p \in P_p])$ is a net system, called the production net of $N$.
\end{itemize}
\end{definition}
The semantics of Petri nets is extended by having colored tokens on production places (R1) and as resources are shared across all cases, tokens on resource places are colorless ($\neg$R2, $\neg$R4). A transition is enabled if and only if there are sufficient tokens on its incoming places using tokens of the same color on production places.

\emph{$\nu$-Petri nets}~\cite{rosa2010decision} are an extension of Petri nets with pure name creation and name management, strictly surpassing the expressive power of regular Petri nets and they essentially correspond to the minimal object-oriented Petri nets of~\cite{kummer2000undecidability}. In a $\nu$-Petri net, names can be created, communicated and matched which can be used to deal with authentication issues~\cite{rosa2007expressiveness}, correlation or instance isolation~\cite{decker2008instance}. Name management is formalized by replacing ordinary tokens by distinguishable ones, thus adding color the the Petri net. 
\begin{definition}{(\texorpdfstring{$\nu$}{nu}-Petri net~\cite{rosa2010decision})}\label{def:nu-Petri_net}
Let $\Var$ be a fixed set of variables. A \emph{$\nu$-Petri net} is a tuple $\nuN = \langle P,T,\F \rangle$, with a set of places $P$, a set of transitions $T$ with $P \cap T = \emptyset$, and a flow function
$\F:(P \times T) \cup (T \times P) \rightarrow \Var^\oplus$ such that $\forall_{t \in T}$, $\Upsilon \cap \pre{t} = \emptyset \;\wedge\; \post{t} \setminus \Upsilon \subseteq \pre{t}$, where $\pre{t} = \bigcup\limits_{p \in P} supp(\F(p,t))$ and $\post{t} = \bigcup\limits_{p \in P} supp(\F(t,p))$. $\Upsilon \subset \Var$ denotes a set of special variables ranged by $\nu, \nu_1, \dots$ to instantiate fresh names.

A marking of $\nuN$ is a function $m: P \rightarrow \Id^\oplus$. $\Id(m)$ denotes the set of names in $m$, i.e. $\Id(m) = \bigcup\limits_{p \in P} supp(m(p))$.

A mode $\mode$ of a transition $t$ is an injection $\mode: \Var(t) \rightarrow \Id$, that instantiates each variable to an identifier.

For a firing of transition $t$ with mode $\mode$, we write $m \xrightarrow{t_\mode} m'$. $t$ is enabled with mode $\mode$ if $\mode(\F(p,t)) \subseteq m(P)$ for all $p \in P$ and $\mode(\nu) \notin \Id(m)$ for all $\nu \in \Upsilon \cap \Var(t) = supp(\cup_{p \in P} \F(p,t))$. The reached state after the firing of $t$ with mode $\mode$ is the marking $m'$, given by:
\begin{equation}
m'(p) = m(p) - \mode(\F(p,t)) + \mode(\F(t,p))\text{ for all }p \in P
\end{equation}
We denote $T_\mode$ to be the set of all possible transition firings.
\end{definition}
$\nu$-Petri nets support instance isolation for cases and resources requiring the tokens involved in a transition firing to have matching colors (R1, R2). Due to the tokens having singular identifiers, correlation between cases and resources can not be captured ($\neg$R4).

\emph{Resource and instance-aware workflow nets} (RIAW-nets)~\cite{montali2016model}, are Petri nets combining the notions from above by defining similar structural criteria for handling resource constraints on top of $\nu$-Petri nets. However, the resource places are assumed to only carry black tokens, not allowing for resource isolation and properly capturing the case-resource correlation.

\emph{Synchronizing proclets}~\cite{fahland2019describing} are a type of Petri net that describe the behavior of processes with many-to-many interactions: unbounded dynamic synchronization of transitions, cardinality constraints limiting the size of the synchronization, and history-based correlation of token identities (R1,R2). This correlation is captured by message-based interaction, specifying attributes of a message as correlation attributes (R4). The correlation constraints are $C_{init}$, $C^{\subseteq}_{match}$ and $C^=_{match}$, for initializing the attributes, partially and fully matching them. $\nu$-Petri nets are at the basis of proclets handling multiple objects by separating their respective subnets. While the proclet formalism is sufficient for satisfying all requirements listed above, they extend to many-to-many relations, which lifts the restriction that a resource can only be claimed by a single case.

\emph{Object-centric Petri nets}~\cite{van2020discovering}, similarly to synchronizing proclets, describe the behavior of processes with multiple perspectives and one-to-many and many-to-many relations between the different object types. These nets are a restricted variant of colored Petri nets where places are typed, tokens are identifiable referring to objects (R1,R2), and transitions can consume and produce a variable number of tokens. Correlation can be achieved with additional places of combined types (R4). Again, due to many-to-many relations, our one-to-many restriction on resources is lifted.

\emph{Database Petri nets} (DB-nets)~\cite{montali2017db} are extensions of $\nu$-Petri nets with multi-colored tokens that allows for multiple types of objects and their correlation (R1,R2,R4). Additionally, they support underlying read-write persistent storage consisting of a relational database with full-fledged constraints. Special ``view'' places in the net are used to inspect the content of the underlying data, while transitions are equipped with database update operations. These are in the general sense not invertible causing undecidability ($\neg$R5).

\emph{Catalog Petri nets} (CLog-nets)~\cite{ghilardi2020petri} are similar to DB-nets, but without the ``write'' operations (R1,R2,R4). The queries from view places in DB-nets have been relocated to transition guards, relying solely on the ``read-only'' modality for a persistent storage, however suffering from the same undecidability problem as these guards are not invertible in the general sense ($\neg$R5).

\subsection{Resource constrained $\nu$-Petri net with fixed color types}\label{sec:rc_nunet}
We combine conceptual ideas from the extensions described above, by extending RIAW-nets, which inherit the modeling restrictions from RCWF-nets and name management from $\nu$-Petri nets, using concepts from DB-nets and CLog-nets.

The resource places from RCWF-nets model the availability of resource instances by tokens, which is insufficient to capture correlation of cases by which they are claimed and released. We propose a minimal extension \emph{resource constrained $\nu$-Petri nets} (RC \nunet) which additionally contain busy places $\bar{P}_r = \set{\barp_r \mid r \in R}$ for each resource role. 
Token moves  from $p_r$ to $\barp_r$  show that the resource gets occupied, and moves from $\barp_r$ to $p_r$ show that the resource becomes available. Also tests whether there are free/occupied resources can be modeled. 
A structural condition is imposed on the net to guarantee that resources are \emph{durable}, meaning that resources can neither be created nor destroyed. This also implies that in the corresponding net system with initial and final marking $m_i$ and $m_f$, $m_i(p_r)=m_f(p_r)$ and $m_i(\barp_r)=m_f(\barp_r)$, for any resource role $r \in R$.

Furthermore, similar to DB-nets and CLog-nets, we extend the tokens from carrying single data values to multiple. Where DB-nets and CLog-nets allow for a variable number of predefined color types, we restrict ourselves to two which are strictly typed, to distinguish between both cases and resources.

\begin{definition}{(Resource-constrained \texorpdfstring{$\nu$}{nu}-Petri net)}\label{def:rc_nunet}
Let $C^\e$ be the set of case ids $\Idc$ extended with ordinary tokens, \ie $\e \in \Idc$, and $\IdR^\e$ be the set of resource ids extended with ordinary tokens. A \emph{resource-constrained $\nu$-Petri net} $N = (P, T, \F, m_i, m_f )$ is a Petri net system with $\F: (P \times T) \cup (T \times P) \rightarrow (\Var^\e_c \times \Var^\e_r)^\oplus$, where $\Var_c$ denote case variables and $\Var_r$ denote resource variables, allowing for two colored tokens. $P = (P_p \uplus P_r \uplus \bar{P}_r)$, with production places $P_p$ and resource availability and busy places $P_r = \set{p_r \mid r \in R}$ and $\bar{P}_r = \set{\barp_r \mid r \in R}$. The following modeling restrictions are imposed on $N$ for each $r \in R$:
\begin{enumerate}
    \item $\pre{p_r} + \pre{\barp_r} = \post{p_r} + \post{\barp_r}$, \ie $\forall_{t \in T}~\F(p_r, t)+\F(\barp_r, t) = \F(t, p_r)+\F(t, \barp_r)$;
    \item $m_i(p_r) = m_f(p_r)$ and $m_i(\barp_r) = m_f(\barp_r) = 0$;
\end{enumerate}

A marking of $N$ is a function $m: P \rightarrow (C^\e \times R^\e)^\oplus$ with case ids $C$ and resources $R$, which is a mapping from places to multisets of colored tokens.

A mode of a transition $t$ is an injection $\mode: (\Var^\e_c \times \Var^\e_r)(t) \rightarrow (C^\e \times R^\e)$, that instantiates each variable to an identifier.
\end{definition}

\begin{proposition}
The resource-constrained $\nu$-Petri nets as defined in Def.~\ref{def:rc_nunet} satisfy requirements R1-R5, 
\ie they allow to \emph{distinguish cases} and \emph{resource instances} which are \emph{durable}, and capture \emph{case-resource correlation} while restricting to operations that are \emph{invertible}.
\end{proposition}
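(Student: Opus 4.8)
The plan is to check the five requirements against Definition~\ref{def:rc_nunet} one at a time, since they are largely independent design conditions and only durability (R3) calls for a genuine invariant argument. For R1 and R2 I would reason directly from the firing semantics. A marking assigns to each place a multiset over $C^\e \times R^\e$, and a firing $m \xrightarrow{t_\mode} m'$ is governed by a single mode $\mode$, an injection instantiating every variable on the arcs of $t$ consistently. Enabledness forces $\mode(\F(p,t)) \le m(p)$ for every place $p$, so all tokens consumed by one firing carry the case component (respectively resource component) dictated by $\mode$ on the relevant variable. Hence two tokens can be consumed together only when their case (respectively resource) colors match a common binding, which is precisely the instance-isolation property inherited from $\nu$-Petri nets; using distinct case variables on one transition still permits deliberate batching. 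As the resource component ranges over the instances $\IdR = \uplus_{r\in R}\Idr$, resource instances are uniquely identifiable as well.

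The core of the argument is R3. For each role $r$ I would read modeling restriction~1 of Definition~\ref{def:rc_nunet}, namely $\F(p_r,t)+\F(\barp_r,t)=\F(t,p_r)+\F(t,\barp_r)$ for all $t$, as a coloured place invariant putting weight $1$ on $p_r$ and $\barp_r$ and $0$ elsewhere. Applying any mode $\mode$ to a firing $m \xrightarrow{t_\mode} m'$ and substituting the marking-update equation of Definition~\ref{def:nu-Petri_net} gives
\begin{equation}
m'(p_r)+m'(\barp_r) = m(p_r)+m(\barp_r) - \mode\big(\F(p_r,t)+\F(\barp_r,t)\big) + \mode\big(\F(t,p_r)+\F(t,\barp_r)\big),
\end{equation}
and restriction~1 makes the two $\mode(\cdot)$ terms cancel, so the combined resource multiset on $\set{p_r,\barp_r}$ is preserved by every firing. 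Inducting over firing sequences shows this multiset is constant throughout $\R(N,m_i)$, exactly the defining property of place invariants recalled in the preliminaries. Thus no resource token for role $r$ is ever created or destroyed, i.e.\ resources are durable, and restriction~2 pins the boundary so that every resource claimed (moved from $p_r$ to $\barp_r$) is released again, with $m_f(\barp_r)=0$ and $m_f(p_r)=m_i(p_r)$.

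Requirements R4 and R5 then follow from the token structure. Because $\barp_r$ carries tokens over $C^\e\times R^\e$, it records for each occupied resource both its identifier and the case that claimed it; inscribing a release arc from $\barp_r$ with a paired variable forces the mode to rebind exactly that case--resource pair, so the correlation is both stored in the marking and enforced at firing time, which excludes the case-switching execution of the R4 example. For R5 I would observe that the only operation on token values is the injective mode $\mode:(\Var^\e_c\times\Var^\e_r)(t)\to C^\e\times R^\e$ applied to tuples of plain variables: no transformation such as the $f(v)$ of the R5 example is ever applied, so the binding is recoverable from the consumed and produced tokens and is invertible, in contrast with the database updates of DB-nets and the guards of CLog-nets. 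I expect the one real obstacle to be making the coloured-invariant step for R3 rigorous, since the place-invariant machinery in the preliminaries is stated for uncoloured nets and the cancellation must be argued at the level of multisets of coloured tokens rather than scalar token counts.
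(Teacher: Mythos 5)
Your proposal is correct and follows essentially the same route as the paper's proof: strict two-colored typing for R1/R2, the $(1,1)$ place invariant on $p_r$ and $\barp_r$ derived from modeling restriction~1 for R3, the paired case--resource tokens on the busy places for R4, and invertibility of the mode bindings for R5. Your version is simply more detailed (and rightly flags that the invariant must be read at the level of coloured multisets), but no new idea or different decomposition is involved.
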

\begin{proof}
The two-colored strictly typed tokens distinguish both the cases (R1) and resource instances (R2) in the system. The modeling restrictions imposed on the RC \nunet~ enforce that for each resource role $r \in R$, tokens can only move between $p_r$ and $\barp_r$, \ie we have the place invariant $(1,1)$ on $p_r$ and $\barp$, implying that $m(p_r) + m(\barp_r) = m_i(p_r)$ for any reachable marking $m$, and that all resource tokens are returned to $p_r$ when the net reaches its final marking, ensuring that resources are durable (R3). The two colors on tokens residing in $\barp$ capture correlation between cases and resources instances (R4), denoting by which case a resource instance is claimed throughout their interaction. As the transition firing's modes are bijective functions, each operation on $N$ is invertible (R5).
\qed
\end{proof}

Note that the RC \nunet~formalism is a restricted version of DB-nets, CLog-net and synchronizing proclets, as all three can capture the behavior that can be modeled by RC $\nu$-nets. DB-nets and CLog-nets additionally have database operations which we deem not relevant for our purposes. Synchronizing proclets allow for many-to-many interactions, while we assume that a resource instance cannot be shared by several cases at the same time.

\section{Complete event logs alignments} \label{sec:log_alignments}
Several state-of-the-art techniques in conformance checking use alignments to relate the recorded executions of a process with a model of this process~\cite{adriansyah2014aligning}. An alignment shows how a log or trace can be replayed in a process model, which can expose deviations explaining either how the process model does not fit reality or how the reality differs from what should have happened. 

Traditionally, this is computed for individual traces, however, as we show in previous work~\cite{sommers2022aligning}, this fails to expose deviations on a multi-case and -resource level in processes with inter-case dependencies as described in Sec.~\ref{sec:rc_nunet}. In this section, we go over the foundations of alignments in Sec.~\ref{sec:al_foundations} and show how we extend this to compute alignments of complete event logs in Sec.\ref{sec:al_extended}.

\subsection{Foundations of alignments}\label{sec:al_foundations}
At the core of alignments are three types of moves: log, model, and synchronous moves (cf. Def.~\ref{def:align_moves}), indicating, respectively, that an activity from the log can not be mimicked in the process model, that the model requires the execution of some activity not observed in the log, and that observed and modeled behavior of an activity agree.
\begin{definition}{(Log, model and synchronous moves)}\label{def:align_moves}
Let $L$ be an event log and $N = (P, T, \F, \ell, m_i, m_f)$ be a labeled $\nu$-Petri net with $T_\mode$ the set of all possible firings in $N$.
We define the set of log moves $\Gamma_l = \set{(e, \gg) \mid e \in L}$, the set of model moves $\Gamma_m = \set{(\gg, t_\mode) \mid t_\mode \in T_\mode}$ and the set of synchronous moves $\Gamma_s = \set{(e, t_\mode) \mid e \in L, t_\mode \in T_\mode, \act(e) = \ell(t)}$. As abbreviations, we write $\Gamma_{ls} = \Gamma_l \cup \Gamma_s$, $\Gamma_{lm} = \Gamma_l \cup \Gamma_m$, $\Gamma_{ms} = \Gamma_m \cup \Gamma_s$, and $\Gamma_{lms} = \Gamma_l \cup \Gamma_m \cup \Gamma_s$.
\end{definition}
Log moves and model moves can expose deviations of the real behavior from the model, by an \emph{alignment} (cf. Def.~\ref{def:alignment}) on a net $(N, m_i, m_f)$ and event log $L$ (possibly a single trace) which is a poset of moves from Def.~\ref{def:align_moves} incorporating the event log and execution sequences in $N$ from $m_i$ to $m_f$:
\begin{definition}{(Alignment)}\label{def:alignment}
An \emph{alignment} $\gamma = \alignm(N, L)$ of an event log  $L=(\bar{L},\prec_L)$ and a labeled Petri net $N = (P, T, \F, \ell, m_i, m_f)$ is a poset $\gamma=(\bar\gamma, \prec_\gamma)$, where $\bar\gamma \subseteq (\Gamma_l \cup \Gamma_s \cup \Gamma_m^\oplus)$, having the following properties:
\begin{enumerate}
    \item $\overline{\gamma\proj_L} = \bar{L}$ and $\prec_L \subseteq \prec_{\gamma\proj_L}$ \label{prop:align1}
    \item $m_i \xrightarrow{\gamma\proj_T} m_f$, \ie $\forall_{\sigma \in (\gamma\proj_T)^<}, m_i \xrightarrow{\sigma} m_f$ \label{prop:align2}
\end{enumerate}
with alignment projections on the log events $\gamma\proj_L$ and on the transition firings $\gamma\proj_{T_\mu}$:
\begin{align}
    \gamma\proj_L &= \left( \sset{e \mid (e, t_\mode) \in \gamma \cap \Gamma_{ls}}, \sset{(e,e') \mid ((e, t_\mode), (e', t'_\mode)) \in \prec_\gamma \cap (\Gamma_{ls} \times \Gamma_{ls})}  \right) \\
    \gamma\proj_T &= \left( \sset{t_\mode \mid (e, t_\mode) \in \gamma \cap \Gamma_{ms}}, \sset{(t_\mode, t'_\mode) \mid ((e, t_\mode), (e', t'_\mode)) \in \prec_\gamma \cap (\Gamma_{ms} \times \Gamma_{ms})}  \right)
    %
    %
\end{align}
\end{definition}

Note the slight difference in the definition of an alignment as opposed to our previous work in \cite{sommers2022aligning}, where the alignment is simplified from a distributed run to a poset of moves.
The process's history of states (markings) as it has supposedly happened in reality can be extracted from the alignment. For the general case, we introduce the \emph{pseudo-firing} of transitions from corresponding alignment's non-log moves in the process model, to obtain a pseudo-marking, which can be unreachable or contain a negative number of tokens:
\begin{definition}{(Pseudo-markings)}\label{def:pseudomarking}
A \emph{pseudo-marking} $m$ of a Petri net $N=(P,T,\F)$ is a multiset 
$P \rightarrow \Int$, \ie the assigned number of tokens a place contains can be negative. $\widetilde\M(N)$ denotes the set of all pseudo-markings in $N$.
\end{definition}

\begin{definition}{(Pseudo-firing of posets)}\label{def:cutmarking}
Let $N=(P,T,\F,m_i,m_f)$ be a RC \nunet~and $\gamma$ be an alignment on $N$. We define a function $\cutmarking: \mathcal{P}(\gamma) \rightarrow \widetilde\M(N)$, with powerset $\mathcal{P}$, to obtain the model pseudo-marking of every subposet of $\gamma$. For every subposet $\gamma' \subseteq \gamma$, we have for every $p \in P$:
\begin{equation}
    \cutmarking( \gamma' )(p) = m_i(p) + \sum_{(e, t_\mode) \in \gamma': t_\mode \neq \epsilon} \left( \mu\left(\F(t,p) \right) - \mu\left(\F(p, t)\right) \right)
\end{equation}
\ie the pseudo-marking is obtained by firing all the transitions of $\gamma'$ with corresponding modes. Note that it is not necessarily reachable.
\end{definition}

An antichain in an alignment denotes a possible point in time, and therefore a state of the process. By pseudo-firing the respective (open) prefix of the antichain, we obtain the corresponding \emph{pre- (or post-)antichain marking}:
\begin{definition}{(Pre- and post-antichain marking)}
Let $\gamma$ be an alignment and $G \in \mathcal{A}(\gamma)$ an antichain in $\gamma$. The pre- (post-)antichain marking defines the marking reached after the pseudo-firing of $\prefixopen{G}$ ($\prefix{G}$), \ie $\cutmarking(\prefixopen{G})$ ($\cutmarking(\prefix{G})$).
\end{definition}

\subsection{Alignments extended to include inter-case dependencies}\label{sec:al_extended}
The foundational work on constructing alignments is presented in~\cite{adriansyah2014aligning} and it relies on the synchronous product of the Petri net $N = (P,T,\F,\ell,m_i,m_f)$ modeling a process and a trace Petri net $N_\sigma = (P^{(\sigma)}, T^{(\sigma)}, \F^{(\sigma)}, \ell^{(\sigma)}, m_i^{(\sigma)}, m_f^{(\sigma)})$ (a Petri net representation of a trace in the event log). The synchronous product consists of the union of $N$ and $N_\sigma$, and a transition $t_s$ for each pair of transitions $(t_m,t_l) \in T \times T^{(\sigma)}$ with $\pre{t_s} = \pre{t_m} + \pre{t_l}$ and $\post{t_s} = \post{t_m} + \post{t_l}$, iff $t_m$ and $t_l$ share the same label and variables on the incoming arcs, \ie $\ell(t_m) = \ell^{(\sigma)}(t_l)$ and $Var(t_m) = Var(t_l)$. The alignment is then computed by a depth-first search on the synchronous product net from $m_i + m_i^{(\sigma)}$ to $m_f + m_f^{(\sigma)}$ using the $A^*$ algorithm, with the firings of transition from $T^{(\sigma)}$, $T$ and $T^{(s)}$ corresponding to the log, model and synchronous moves from Def.~\ref{def:align_moves}~\cite{adriansyah2014aligning}.

With $c: \Gamma_{lms} \rightarrow \mathbb{R}^+$ a cost function, usually defined for each $(e, t_\mode) \in \Gamma_{lms}$ as follows:
\begin{equation}
    c((e, t_\mode)) =
    \begin{cases}
        0        & (e, t_\mode) \in \Gamma_s \\
        1        & (e, t_\mode) \in \Gamma_{lm} \wedge \ell(t) \neq \tau \\
        \epsilon & \ell(t) = \tau
    \end{cases} 
\end{equation}
The \emph{optimal alignment} is an alignment $\gamma$ such that $\sum_{g \in \gamma} c(g) \leq \sum_{g \in \gamma'} c(g)$ holds for any alignment $\gamma'$, which prefers synchronous moves over model and log moves. In terms of conformance checking and exposing realistic deviations, the optimal alignment provides the ``best'' explanation for the relation between observed and modeled behavior.

In Sec.~\ref{sec:rc_nunet}, we have shown how a RC \nunet~is a Petri net formalism with capability of modeling inter-case dependencies and suitability for conformance checking. We extend the alignment problem 
  in order to expose inter-case deviations by adapting the synchronous product net to \nunet s: an RC \nunet~and the log \nunet:

\begin{definition}{(Log $\nu$-Petri net)}\label{def:trace_nu_net}
Given an event log $L$, a \emph{log $\nu$-Petri net} $N^{(L)} = (P^{(L)}, T^{(L)}, \F^{(L)}, \ell^{(L)}, m_i^{(L)}, m_f^{(L)})$ is a labeled \nunet~constructed as follows.
For every $e \in L$, we make a transition $t_e \in T^{(L)}$ with $\ell^{(L)}(t) = \act(e)$, and for each resource instance $\idr_r \in supp(\Res(e))$ we make a place $p \in P^{(L)}$ with $\pre{p} = \emptyset$, $\pre{p} = \mset{|\idr_r| \cdot t}$, $\F^{(L)}(p, t) = \mset{|\idr_r| \cdot (\e, r)}$ and $m_i^{(L)}(p)((\e, \idr)) = |\idr|$.
Further, for every pair $(e_1, e_2) \in \prec_L$, we make a place $p \in P^{(L)}$ with $\pre{p} = \mset{t_{e_1}}$, $\post{p} = \mset{t_{e_2}}$ and 
\begin{equation}
    \F^{(L)}(t_{e_1}, p) = \F^{(L)}(p, t_{e_2}) =
    \begin{cases}
        \mset{(c, \e)}           & \ecase(e_1) = \ecase(e_2) \\
        \mset{(\e, \e)} & \text{otherwise}
    \end{cases}
\end{equation}
For every $e^{-} \in \min(L)$, we make a place $p^{-} \in P^{(L)}$ with $\pre{{p^{-}}} = \emptyset$, $\post{{p^{-}}} = \mset{t_{e^{-}}}$ and $m_i^{(L)}(p^{-})((case(e^{-}), \e)) = 1$. 
Similarly, for every $e^{+} \in \max(L)$, we make a place $p^{+} \in P^{(L)}$ with $\pre{{p^{+}}} = \mset{t_{e^{+}}}$, $\post{{p^{+}}} = \emptyset$ and $m_f^{(L)}(p^{+})((case(e^{+}), \e)) = 1$.
\end{definition}

Computing the \emph{complete event log alignment} is again a matter of finding a path from the initial to the final marking in the synchronous product net, \ie from $m_i + m_i^{(L)}$ to $m_f + m_f^{(L)}$, for which we can use \emph{any} of the existing methods as described before. The optimal alignment is again the one with lowest cost. In terms of complexity, the alignment problem with an empty event log and an all-zero cost function can be reduced to the reachability problem for bounded Petri nets from $m_i$ to $m_f$, which has exponential worst-case complexity\cite{murata1989petri}. Adding event to the log $\nu$-Petri net and a non-zero costs on moves makes the problem strictly more complex.

Note that while $\nu$-Petri nets are inherently unbounded in general due to the generation of fresh tokens, we can retain boundedness in the context of alignments, since the bound is predicated by the event log and we can get this information by preprocessing it.

\begin{figure}
\centering
\includegraphics[scale=0.5]{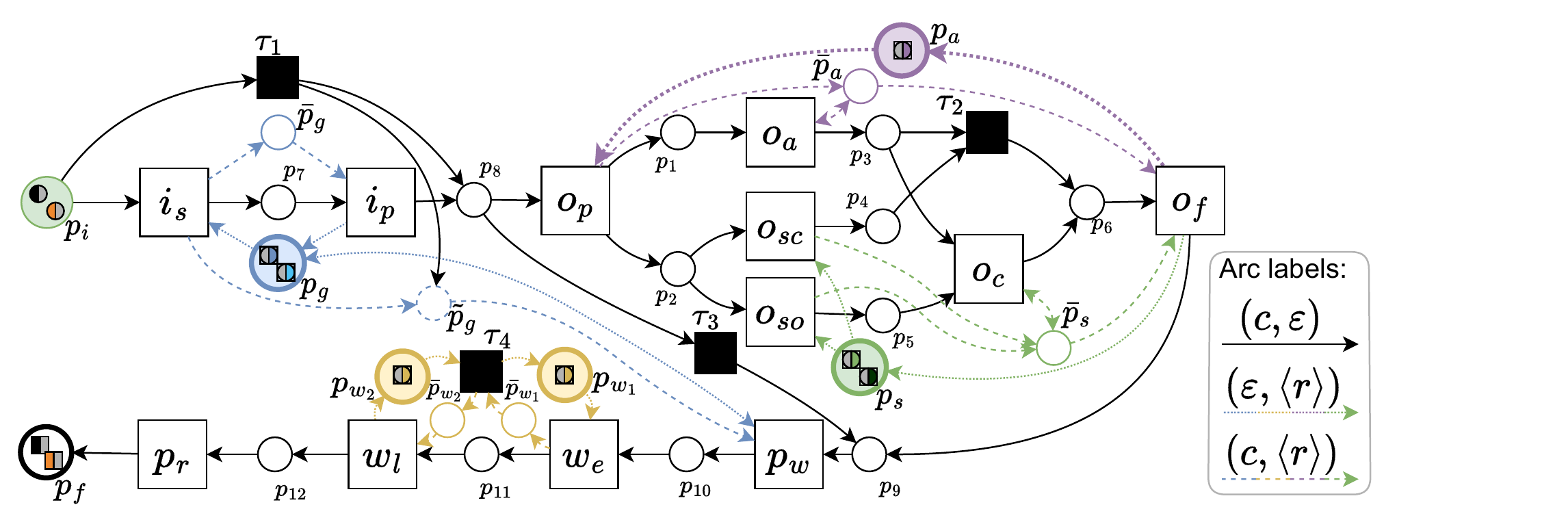}
    \caption{Process model RC \nunet~$N$, with initial and final marking, annotated with circular and square tokens respectively.}
    \label{fig:re_net}
\includegraphics[scale=0.5]{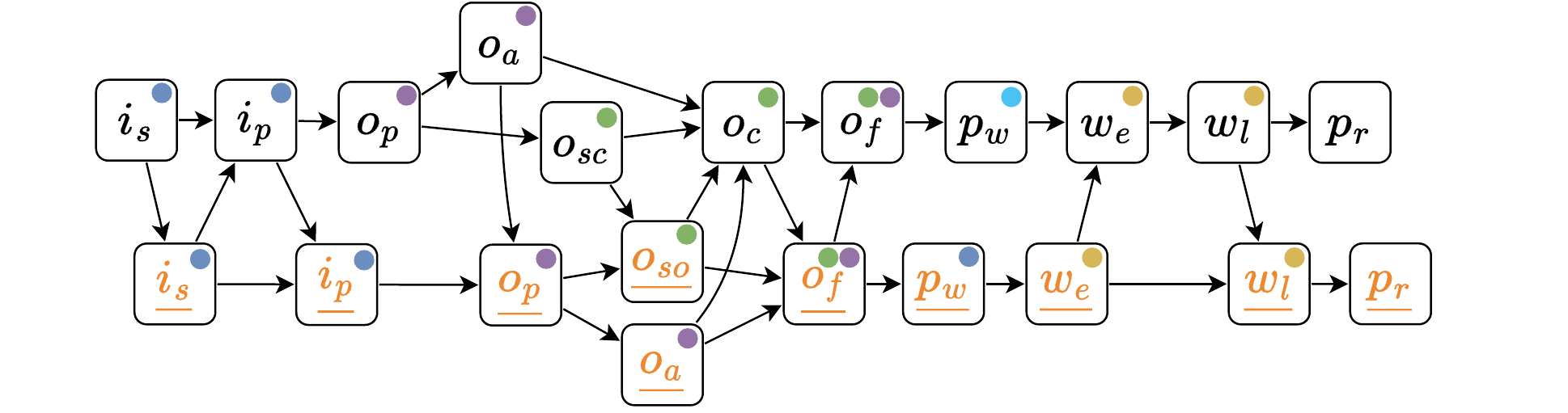}
    \caption{Event log $L$.}
    \label{fig:re_log}
\includegraphics[scale=0.5]{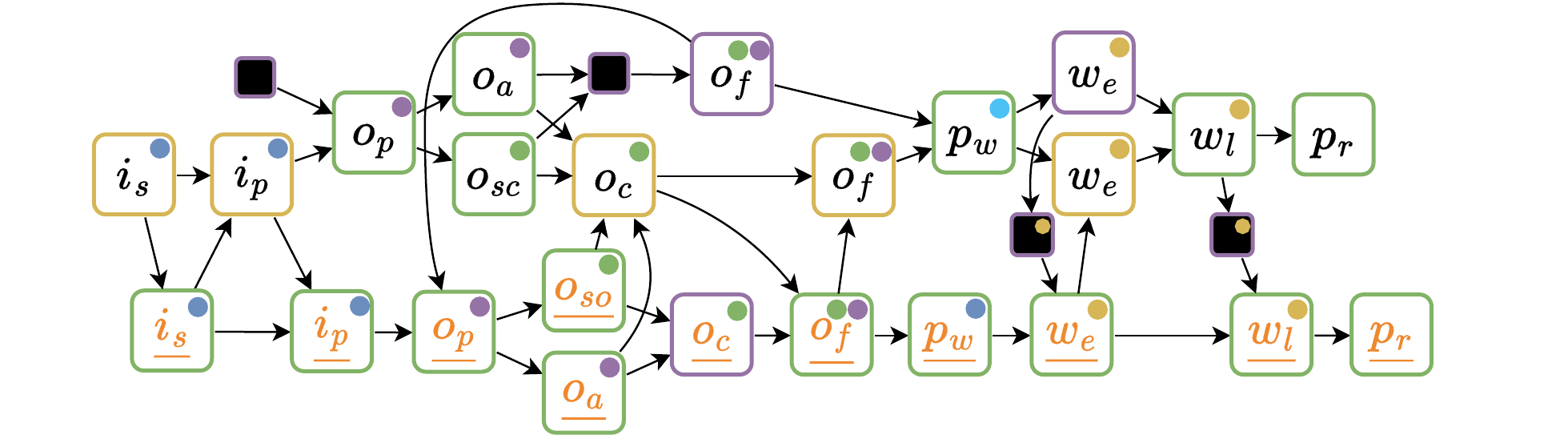}
    \caption{Complete event log alignment $\gamma$, with the colors depicting the move types; green, purple, and yellow for synchronous, model, and log moves respectively.}
    \label{fig:re_alignment}
\includegraphics[scale=0.5]{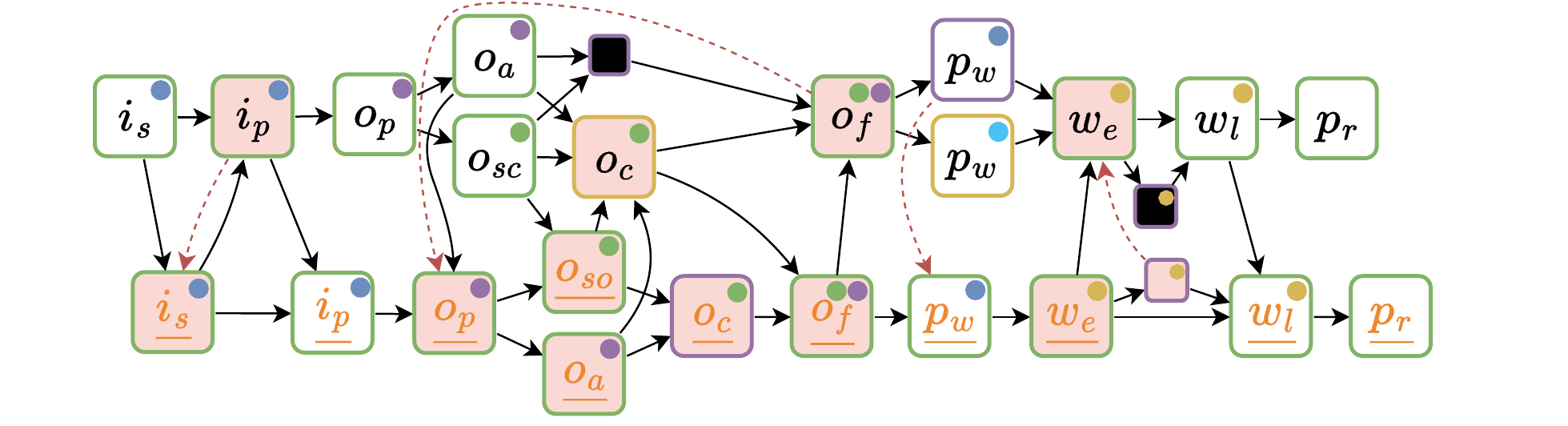}
    \caption{Composed alignment $\compal$ with annotated permutation and realignment intervals.}
    \label{fig:re_compal}
\includegraphics[scale=0.5]{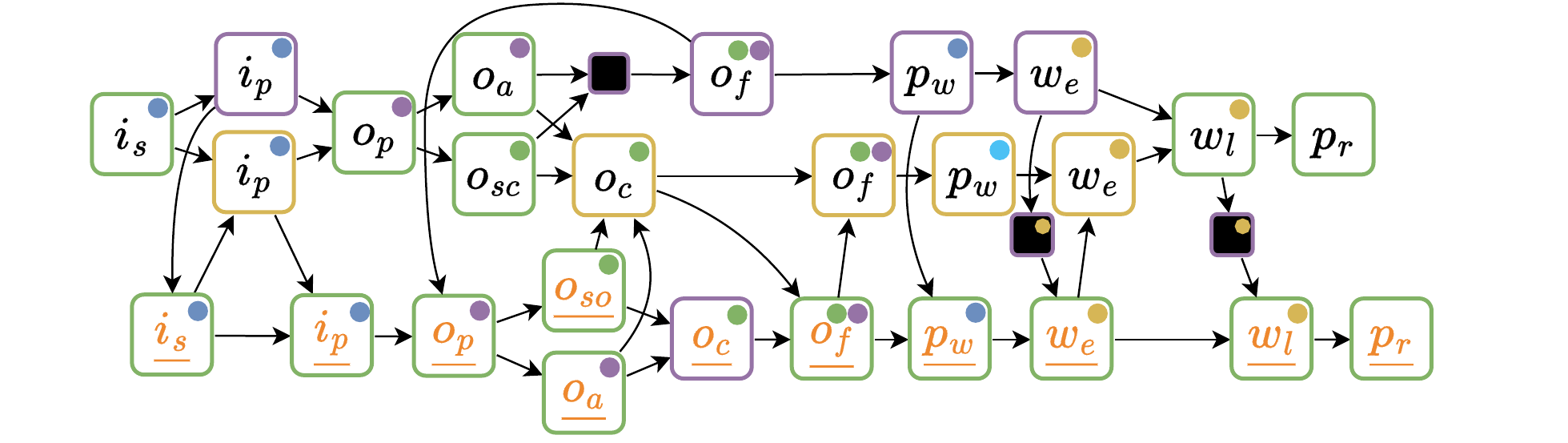}
    \caption{Approximated alignment $\gamma^*$.}
    \label{fig:re_alignment_approx}
\end{figure}
For our running example, modeled in Fig.~\ref{fig:re_net}, we extend the small operation process from Fig.~\ref{fig:requirements} with an assistant resource during the operation, an intake subprocess $(i_s, i_p)$ involving a general practitioner (GP), and a prescription subprocess with a FIFO waiting room $(p_w, w_e, w_l, p_r)$, where the prescription can only be written by the GP involved in the intake, if appropriate. Both the intake and operation subprocesses can be skipped via silent transitions $\tau_1$ and $\tau_3$ respectively in $N$. Fig.~\ref{fig:re_log} shows the recorded event log $L$ of this process which concerns two patients. An optimal complete event log alignment on $N$ and $L$, computed by the method above is presented in Fig.~\ref{fig:re_alignment}.

\section{Approximation by composition and local realignments}\label{sec:approximation}
Since multiple cases are executed in parallel, computing the alignment on the complete event log $L$, as described in Section~\ref{sec:log_alignments}, is a computationally expensive task. At the same time, one can see that the multi-case and -resource alignment only deviates from the classical individual alignments when violations occur on the inter-case dependencies, \eg when a resource is claimed while it is already at maximal capacity.
\begin{figure}[t]
\centering
\includegraphics[scale=0.5]{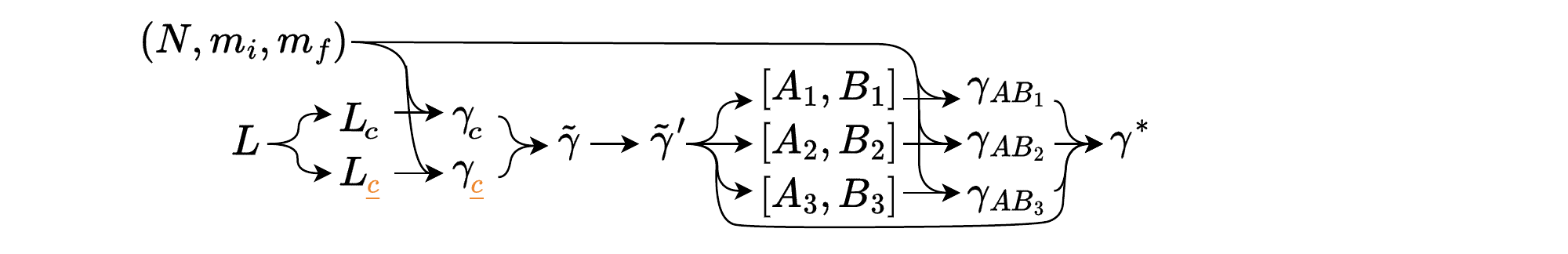}
    \caption{Overview of our approximation method.}
    \label{fig:approximation_overview}
\end{figure}

We can approximate the alignment of a complete event log $L$ and a Petri net $N$ by using a composition of individually aligned cases. An overview of this method is illustrated in Fig.~\ref{fig:approximation_overview}, which we subdivide into two parts, described respectively in Sec.~\ref{sec:comp_viol} and \ref{sec:resolve_vio}.
\begin{enumerate}
    \item $L$ is decomposed into the individual cases ($L_c, L_{\co{c}}$), which are aligned to $N$ ($\gamma_c, \gamma_{\co{c}}$) and composed using the event log's partial order $\prec_L$ ($\compal$). The result is not necessarily an alignment as inter-case deviations may be left unresolved;
    \item We transform this composed alignment into a valid alignment by taking a permutation ($\compal'$) and realigning parts ($\interval{A_1}{B_1}, \interval{A_2}{B_2}, \interval{A_3}{B_3}$) of the event log locally to resolve the violations. The approximated alignment ($\gamma^*$) is obtained by substituting the realignments ($\gamma_{AB_1}, \gamma_{AB_2}, \gamma_{AB_3}$). 
\end{enumerate}
%

The implementation of both the original method from~\cite{sommers2022aligning} and the approximation method for computing complete event log alignments is available at \href{https://gitlab.com/dominiquesommers/mira}{gitlab.com/dominiquesommers/mira}, including the examples used in this paper and some additional examples.

\subsection{Composing individual alignments}\label{sec:comp_viol}
For every case $c \in \Idc$, we have the trace $L_c$ (cf. Def.~\ref{def:event_log}) projected on the case identifier $c$. As described in Sec.\ref{sec:log_alignments}, the optimal complete event log alignment $\gamma_L$ consists of individual alignments $\gamma_c$, on $N$ and $L_c$ for every $c \in \Idc$, composed together respecting the event log's partial order $\prec_L$, where each $\gamma_c$ is not necessarily optimal with regard to $L_c$.

It is computationally less expensive to compute the optimal alignments $\gamma_c = \alignm(N, L_c)$  for each $c \in \Idc$ and then approximate $\gamma_L$. We create a composed alignment $\compal$ with the optimal individual alignments and the event log's partial order, as defined in Def.~\ref{def:composed_al}. Fig.\ref{fig:re_compal} shows the composed alignment for the running example with additional annotations (in red) which we cover later.
\begin{definition}{(Composed alignment)}\label{def:composed_al}
Given a Petri net $N$ and an event log $L$ with traces $L_c$ for $c \in \Idc$, let $\gamma_c = \alignm(N, L_c)$ be the corresponding optimal individual alignments. The \emph{composed alignment} $\compal = \doublecup_{\idc \in \Idc} \gamma_c$ is the union of individual alignments with the extended partial order on the synchronous moves, defined as the transitive closure of the union of partial orders from the individual alignments and the partial order on moves imposed by the partial order $\prec_L$ of the event log:

\begin{equation}
    \prec_\compal \;= \left( \bigcup_{c \in \Idc} \prec_{\gamma_c} \cup \prec_{\gamma_L} \right)^+
\end{equation}
with $\prec_{\gamma_L} = \left\{ \spair{(e, t_\mode)}{(e', t'_\mode)} \mid e \prec_L e', (e, t_\mode), (e', t'_\mode) \in (\gamma \cap \Gamma_{ls}) \right\}$. 
\end{definition}
Recall that for every sequence $\sigma \in \compal\proj_T$ of an alignment $\compal$, we have $m_i \xrightarrow{\sigma} m_f$, \ie $\sigma$ is a firing sequence in $N$. This property is not guaranteed for a composed alignment, even in the absence of inter-case deviations. In the presence thereof, we say that a composed alignment is \emph{violating} as there exists no such sequence.
\begin{definition}{(Violating composed alignment)}\label{def:violations}
Let $\idr_r \in \supp(\IdR)$ be a resource instance and $\compal = \doublecup_{\idc \in \Idc} \gamma_c$ a composed alignment. We define 
\begin{equation}
    \mathcal{S}(\compal) = \set{(\bar\compal', \prec_{\compal'}) \mid \bar\compal = \bar\compal', \prec_\compal \subseteq \prec_{\compal'}, \prec_{\compal'} = (\prec_{\compal'})^+, \forall_{g \in \compal} g \nprec_{\compal'} g }
\end{equation}
as the set of transitively closed and acyclic antichain permutations of $\compal$ that respect the partial order $\prec_\compal$.

$\compal$ is in violation with any of the resource instances if and only if:
\begin{equation}\label{eq:violation}
    \forall_{\compal' \in \mathcal{S}(\compal)} \exists_{G \in \mathcal{A}^{+}(\compal')} \viol(G)
\end{equation}
with violation criteria $\viol: 
\mathcal{A}^+(\compal)
\rightarrow \mathbb{B}$ defined for each maximal antichain $G \in \mathcal{A}^+(\compal)$ 
as follows:
\begin{equation}
    \viol(G) = \exists_{\idr_r \in supp(\IdR)} \left[ \cutmarking(\prefixopen{G})(p_r)((\e,\idr_r)) < \sum_{(e,t_\mode)\in G} \F(p_r,t)(\mode^{-1}((\e, \idr_r)))\right]
\end{equation}
\ie there is no way of firing all transitions in the alignment such that at all times enough capacity is available.
\end{definition}
In Fig.~\ref{fig:re_compal}, antichains meeting the violation criteria are the single moves with an incoming red arc. In Theorem~\ref{the:violating_not_firable} we show that for every sequence of transitions $\sigma \in \compal\proj_T$ in violating composed alignment $\compal$, we have $m_i \not\xrightarrow{\sigma} m_f$, \ie $\compal$ is not firable.
\begin{theorem}{(A violating composed alignment is not firable)}\label{the:violating_not_firable}
Let $\compal = \doublecup_{\idc \in \Idc} \gamma_c$ be a composed alignment on RC \nunet~$N = (P,T,\F,m_i,m_f)$ and event log $L$, such that $\compal$ is violating. Then there exists no firing sequence $\sigma$ in $\compal$ such that $m_i \xrightarrow{\sigma} m_f$.
\end{theorem}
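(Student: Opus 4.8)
The plan is to argue by contradiction. Suppose $\compal$ is violating in the sense of Def.~\ref{def:violations}, yet some linearization $\sigma \in (\compal\proj_T)^<$ nonetheless satisfies $m_i \xrightarrow{\sigma} m_f$. From this single firable linearization I would manufacture one witness $\compal^* \in \mathcal{S}(\compal)$ possessing \emph{no} violating maximal antichain, which directly contradicts the quantifier $\forall_{\compal' \in \mathcal{S}(\compal)}\exists_{G}\viol(G)$ defining ``violating''.

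First I would build $\compal^*$. The sequence $\sigma$ totally orders the transition-bearing moves ($\Gamma_{ms}$) consistently with $\prec_{\compal\proj_T}$; extending this to a total order on \emph{all} moves by interleaving the log moves $(e,\gg)$ is a standard linear-extension step. It is feasible because $\prec_\compal$ is acyclic and $\sigma$ respects its restriction to the transition moves, so the two orderings are jointly consistent (a putative cycle would, after contracting the log-move arcs, force two transition moves ordered both ways by $\sigma$ and by $\prec_{\compal\proj_T}$, which is impossible). Membership $\compal^* \in \mathcal{S}(\compal)$ is then immediate: it shares the ground set $\bar\compal$, it is transitively closed and irreflexive, and it extends $\prec_\compal$.

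The crux is that a \emph{total} order has only singleton maximal antichains, so $\mathcal{A}^+(\compal^*) = \set{\set{g} \mid g \in \compal^*}$ and the violation test collapses to per-move enabledness. Here I would invoke the fact that along a genuine firing sequence no intermediate marking is ever negative, so the pseudo-firing of Def.~\ref{def:cutmarking} coincides with ordinary firing: for the $k$-th transition move $g=(e,t_\mode)$ of $\sigma$, the prefix $\cutmarking(\prefixopen{\set{g}})$ equals the actual marking reached after $\sigma_1\cdots\sigma_{k-1}$ (preceding log moves contribute nothing to the sum). Since $t$ is enabled there, $m \geq \mode(\pre{t})$ holds componentwise, in particular at the resource place $p_r$ for every colour $(\e,\idr_r)$, which is precisely $\neg\viol(\set{g})$; log-move singletons carry zero resource demand and are trivially non-violating. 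Hence $\compal^*$ has no violating maximal antichain.

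Combining the two steps, $\compal^*$ is a member of $\mathcal{S}(\compal)$ with $\forall_G\,\neg\viol(G)$, contradicting the hypothesis that $\compal$ is violating; therefore no firable $\sigma$ exists. I expect the main obstacle to be the bookkeeping in the linear-extension step — verifying that inserting the log moves cannot break consistency and that the resulting order genuinely lies in $\mathcal{S}(\compal)$ — together with the (routine but essential) identification of the pseudo-marking with the actual reachable marking along a firable sequence, which is exactly what turns ``$t$ enabled'' into $\neg\viol$.
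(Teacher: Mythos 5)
Your proof is correct and rests on the same core idea as the paper's: the correspondence between firable linearizations of $\compal\proj_T$ and elements of $\mathcal{S}(\compal)$, together with the observation that $\viol(G)$ is exactly the statement that the marking at $p_r$ goes negative. You run the argument in the contrapositive direction (a firable $\sigma$ yields a total-order witness in $\mathcal{S}(\compal)$ with only non-violating singleton antichains), whereas the paper argues directly from the violating antichain to a negative marking; your version also spells out the linear-extension and pseudo-marking-equals-actual-marking steps that the paper leaves implicit.
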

\begin{proof}
$\compal$ is violating, therefore, for every $\compal' \in \mathcal{S}(\compal)$, there is a maximal antichain $G \in \mathcal{A}^{{+}}(\compal')$ and resource instance $\idr_r \in supp(\IdR)$, such that
\begin{align}
    &\cutmarking(\prefixopen{G})(p_r)((\e,\idr_r)) < \sum_{(e,t_\mode)\in G} \F(p_r,t)(\mode^{-1}((\e, \idr_r))) \\
    & \cutmarking(\prefixopen{G})(p_r)((\e,\idr_r))\: - \sum_{(e,t_\mode)\in G} \F(p_r,t)(\mode^{-1}((\e, \idr_r))) < 0
\end{align}
hence firing the transitions in $G$ leads to a negative marking for $(\e,\idr_r)$ in place $p_r$, which is invalid.\qed
\end{proof}
With an antichain $G \subseteq \mathcal{A}(\compal)$, we show in Lemma~\ref{lem:antichain_marking_viol} that $\cutmarking(\prefixopen{G})$ (and $\cutmarking{\prefix{G}}$) is reachable if an only if the prefix $\prefixopen{G}$ ($\prefix{G}$) is not violating.
\begin{lemma}{(A pre- (and post-)antichain marking in a composed alignment is reachable iff the corresponding prefix is not violating.)}\label{lem:antichain_marking_viol}
Let $\compal = \doublecup_{\idc \in \Idc} \gamma_c$ be a composed alignment on RC \nunet~$N = (P,T,\F,m_i,m_f)$ and event log $L$ and let $G \in \mathcal{A}(\compal)$ be an antichain. Then the pre- (and post-)antichain marking $\cutmarking(\prefixopen{G})$ ($\cutmarking(\prefix{G})$) is reachable if and only if $\prefixopen{G}$ ($\prefix{G}$) is not violating.
\end{lemma}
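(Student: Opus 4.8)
The plan is to prove the equivalence by identifying a valid firing order of the transitions of $\prefixopen{G}$ with a non-violating antichain permutation in $\mathcal{S}(\prefixopen{G})$ (Def.~\ref{def:violations}). I read \emph{$\cutmarking(\prefixopen{G})$ is reachable} as the existence of a linearization $\sigma$ of the transitions of $\prefixopen{G}$ respecting $\prec_\compal$ with $m_i \step{\sigma} \cutmarking(\prefixopen{G})$ through genuine (non-negative) markings; by Def.~\ref{def:cutmarking} every such $\sigma$ ends in the same marking $\cutmarking(\prefixopen{G})$, so only enabledness along the way is at stake. The crucial first step is to reduce all enabledness checks to the single quantity inspected by $\viol$: the available color $(\e,\idr_r)$ on an availability place $p_r$. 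Production places never block, because each $\gamma_c$ is a valid alignment, production tokens carry the case color (R1), and $\prec_{\gamma_c}\subseteq\prec_\compal$ makes the restriction of $\prefixopen{G}$ to case $c$ an order ideal of $\gamma_c$, every linearization of which fires correctly for case $c$ in isolation. A busy place $\barp_r$ never blocks either: a release consumes a $(c,\idr_r)$-token produced by the matching claim of the \emph{same} case (R4), and the claim precedes the release in $\prec_{\gamma_c}\subseteq\prec_\compal$. The place invariant $m(p_r)+m(\barp_r)=m_i(p_r)$ keeps the available count within capacity and ties each busy token to such a matching claim, so no place blocks beyond the availability bound tested by $\viol$.

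For \emph{reachable $\Rightarrow$ not violating}, take a witnessing linearization $\sigma$ and let $\compal'\in\mathcal{S}(\prefixopen{G})$ be the total order it induces. Its maximal antichains are the singletons $\{g\}$ with $g=(e,t_\mode)$, and the pre-antichain marking $\cutmarking(\prefixopen{\{g\}})$ equals the marking at which $\sigma$ fires $g$. Since $g$ is enabled there, $\cutmarking(\prefixopen{\{g\}})(p_r)((\e,\idr_r))\ge\F(p_r,t)(\mode^{-1}((\e,\idr_r)))$, i.e. $\neg\viol(\{g\})$. Hence $\compal'$ has no violating maximal antichain and $\prefixopen{G}$ is not violating.

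For \emph{not violating $\Rightarrow$ reachable}, non-violation gives some $\compal'\in\mathcal{S}(\prefixopen{G})$ no maximal antichain of which is violating. I would construct $\sigma$ greedily, repeatedly firing an arbitrary enabled minimal unfired element, and show the procedure never stalls. The supporting \emph{step property} is that if an antichain $G'$ meets the simultaneous bound $\cutmarking(\prefixopen{G'})(p_r)((\e,\idr_r))\ge\sum_{(e,t_\mode)\in G'}\F(p_r,t)(\mode^{-1}((\e,\idr_r)))$, then its members can be fired one at a time in any order, each enabled, because firing a member only removes its own consumption while $\post{t}$ can only add tokens, so no later member's demand is undercut. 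Combined with the reduction of the first paragraph, this lets me fire $\compal'$ antichain by antichain with non-negative markings throughout, reaching $\cutmarking(\prefixopen{G})$.

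The main obstacle is exactly this no-stall argument, because the marking reached after a greedy initial segment (an order ideal $D$ of $\compal'$) need not coincide with any pre-antichain marking $\cutmarking(\prefixopen{G'})$ that $\viol$ inspects. I would therefore argue at the level of the conserved quantity: by the invariant the available count of $(\e,\idr_r)$ at $D$ equals $m_i(p_r)((\e,\idr_r))$ minus the number of currently held claims, so if every minimal unfired element were a claim blocked on $\idr_r$, all of its capacity would sit in $\barp_r$, held by cases whose releases are not yet minimal; tracing those releases back through $\prec_{\gamma_c}$ exhibits a set of pairwise incomparable claims whose simultaneous demand exceeds capacity, extendable to a maximal antichain of $\compal'$ that violates the bound—contradicting the choice of $\compal'$. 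Making this token-counting match tight is the crux. The post-antichain claim for $\cutmarking(\prefix{G})$ follows verbatim with the closed prefix $\prefix{G}$ in place of $\prefixopen{G}$ (equivalently, firing $G$ as the final step), and Theorem~\ref{the:violating_not_firable} supplies the contrapositive shortcut for the easy direction.
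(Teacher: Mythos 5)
Your strategy coincides with the paper's. The forward direction is exactly the published argument: a witnessing linearization of $\prefixopen{G}$ induces a total order in $\mathcal{S}(\prefixopen{G})$ whose maximal antichains are singletons, and the pre-antichain marking of each singleton is the intermediate marking of the firing sequence just before that move, so enabledness gives $\neg\viol$ everywhere; your version of this is, if anything, stated more carefully than the paper's (which quantifies over $\mathcal{S}(\compal)$ and reuses $G$ for two different antichains). The divergence is in the converse. The paper's proof of that direction is a one-line assertion that $m_i \xrightarrow{\sigma} \cutmarking(\prefix{G})$ for some $\sigma$ respecting the non-violating permutation $\compal'$, with no construction of $\sigma$ and no enabledness argument. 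You instead do the real work: the reduction of enabledness to the availability places $p_r$ is sound (production places cannot block because $\prefixopen{G}\cap\gamma_c$ is downward-closed in $\gamma_c$ and each $\gamma_c$ is a valid alignment; busy places cannot block because each release is preceded by its matching claim under $\prec_{\gamma_c}\subseteq\prec_\compal$ and R4 fixes the token color), and you then need a no-stall argument for a greedy linearization of $\compal'$. That is the one genuine gap in your write-up, and you correctly name it: the marking reached after an arbitrary order ideal $D$ of $\compal'$ need not be the pre-antichain marking of any maximal antichain, so $\neg\viol$ on $\mathcal{A}^+(\compal')$ does not directly bound it, and your proposed repair — tracing the held busy tokens back to a set of pairwise incomparable claims whose joint demand exceeds capacity — still has to handle the case where a matching release is itself minimal in $\compal'\setminus D$ but blocked on a different resource that its transition also claims, and has to establish the incomparability of the traced claims in $\compal'$ rather than merely in $\compal$. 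You should be aware that the paper elides exactly this step, so your attempt is not weaker than the published proof; it simply makes the missing lemma explicit rather than closing it.
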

\begin{proof}
We prove the lemma by proving both sides of the bi-implication:

$(\implies)$ $m_G = \cutmarking(\prefixopen{G})$ is reachable, hence there exists a sequence $\sigma \in \prefixopen{G}^*$ with $\prec_{\prefixopen{G}} \subseteq \prec_\sigma$ such that $m_i \xrightarrow{\sigma} m_G$. Let $\compal' \in \mathcal{S}(\compal)$ be an antichain permutation with $\prec_\sigma \subseteq \prec_{\compal'}$. Then by definition of reachable marking, for every maximal antichain $G \in \mathcal{A}^+(\compal')$ and every resource instance $\idr_r \in supp(\IdR)$, we have $\cutmarking(\prefixopen{G})(p_r)((\e, \idr_r)) \geq \sum_{(e, t_\mode) \in G} \F(p_r,t)(\mode^{-1}((\e, \idr_r)))$. Thus $\prefixopen{G}$ is not violating.

$(\impliedby)$ $\prefix{G}$ is not violating, hence there exists a $\compal' \in \mathcal{S}(\prefix{G})$, such that for all $G' \in \mathcal{A}^{+}(\compal')$ and all $\idr \in supp(\IdR)$ we have:
\begin{equation}
    \cutmarking(\prefixopen{G})(p_r)((\e,\idr_r)) \geq \sum_{(e,t_\mode)\in G} \F(p_r,t)(\mode^{-1}((\e, \idr_r)))
\end{equation}
$m_i \xrightarrow{\sigma} \cutmarking(\prefix{G})$ with $\sigma$ respecting the partial order $\prec_{\compal'}$. \qed
\end{proof}

\subsection{Resolving violations in the composed alignment}\label{sec:resolve_vio}
Let $\compal' \in \mathcal{S}(\gamma)$ be an antichain permutation of $\compal$. Then, by Def.~\ref{def:violations}, we have a set of violating maximal antichains (which is empty when $\compal$ is not violating) where the corresponding transitions are not enabled. Instead of needing to align the complete event log, we show that we can resolve violations locally around such antichain. For each violating antichain $G$, there exists an interval $\interval{A}{B} \subseteq \compal'$ with $A \preceq G \preceq B$ such that $\interval{A}{B}$ is alignable, formally defined in Def.~\ref{def:align_interval}.
\begin{definition}{(Alignable interval)}\label{def:align_interval}
    Let $\gamma = \doublecup_{\idc \in \Idc} \gamma_c$ be a composed alignment on RC \nunet~$N = (P,T,\F,m_i,m_f)$ and event log $L$, and let $A,B \in \mathcal{A}(\gamma)$ be two antichains. We say that the interval $\interval{A}{B}$ is alignable if and only if $m_B = \cutmarking(\prefix{B})$ is reachable from $m_A = \cutmarking(\prefixopen{A})$, \ie $m_A \xrightarrow{*} m_B$, assuming $m_A$ is reachable.
\end{definition}

Note that $\interval{\min(\gamma')}{\max(\gamma')}$ is always an alignable interval. We use our running example to show that it can be taken locally around $G$ instead, \eg $\interval{\set{\co{i_s}}}{\set{i_p}}$ with $G = \set{\co{i_s}}$ (cf.~ Fig.~\ref{fig:re_compal}). Note how the violation can be resolved by substituting $\interval{A}{B}$ by a subalignment from $m_A = \cutmarking(\prefixopen{A})$ to $m_B = \cutmarking(\prefix{B})$.

In order to prove statements that do not depend on a chosen realignment mechanism, we now assume that there exists a function $f_\compal: \mathcal{A}^+(\compal) \rightarrow \mathcal{P}(\compal)$ that produces an alignable interval $\interval{A}{B}$ for an arbitrary $G \in \mathcal{A}^+(\compal)$. 

\begin{align}
    W(\compal'_V) = \{ \interval{\min(\gamma_v)}{\max(\gamma_v)} \mid &~\gamma_v \subseteq \compal'_V, \forall_{g \in \gamma_v, g' \in \compal'_V \setminus \gamma_v} g \|_{\compal'_V} g', \\
    &~\forall_{g \in \gamma_v}\exists_{g' \in \gamma_v} g \not\|_{\gamma_v} g' \} \nonumber
\end{align}
with $\compal'_V = \bigcup_{G \in \mathcal{A}^+(\compal')} f_{\compal'}(G)$, denotes the set of alignable intervals covering every violating antichain in $\compal'$, and it is annotated in red for the running example in Fig.~\ref{fig:re_compal}, with the three intervals $\interval{\set{\co{i_s}}}{\set{i_p}}$, $\interval{\set{\co{o_p}}}{\set{o_f}}$, and $\interval{\set{\co{w_e}}}{\set{\tau}}$ covering the violating antichains $\set{\co{i_s}}$, $\set{\co{o_p}}$, $\set{\co{o_{so}}}$, and $\set{w_e}$.

We resolve the violations in $\compal'$ by substituting every interval $\interval{A}{B} \in W(\compal'_V)$ by an alignment $\gamma_{AB}$ on $N$ and $\interval{A}{B}\proj_L$ from $m_A = \cutmarking(\prefixopen{A})$ to $m_B = \cutmarking(\prefix{B})$.

Since, for now, we assume that every interval $f(G)$ is alignable, a subalignment $\gamma_{AB}$ exists. The approximated alignment $\gamma^* = (\bar\gamma^*, \prec_{\gamma^*})$ is then defined as follows:
\begin{align}
    \bar\gamma^* &= \bigcup_{[A,B] \in W(\compal'_V)} \bar\gamma_{AB} \cup (\bar\compal \setminus \bar\compal'_V) \label{eq:al_approx_set}\\
    \prec_{\gamma^*} &= \left( \bigcup_{[A,B] \in W(\compal'_V)} \prec_{\gamma_{AB}} \cup \sset{(g_1, g_2) \mid g_1,g_2 \in \gamma \setminus \compal'_V, g_1 \prec_{\compal'} g_2} \right)^+   \label{eq:al_approx_prec}
\end{align}
$\gamma^*$ for the running example is shown in Fig.~\ref{fig:re_alignment_approx} with substituted realignments for the intervals annotated in red from Fig.~\ref{fig:re_compal}. Note that $\gamma^*$ is an approximation of the optimal alignment $\gamma$ from Fig.~\ref{fig:re_alignment} as $c(\gamma^*) \geq c(\gamma)$, due to the local realignments. In Theorem~\ref{the:ilp_alignment} we show that $\gamma^*$ is a valid alignment.
\begin{theorem}{($\gamma^*$ is an alignment.)}\label{the:ilp_alignment}
Let $\compal = \doublecup_{\idc \in \Idc} \gamma_c$ be a composed alignment on RC \nunet~$N = (P,T,\F,m_i,m_f)$ and event log $L$ and let $\compal' \in \mathcal{S}(\compal)$ be an antichain permutation of $\compal$, with $W(\compal'_V)$ the set of alignable intervals covering every violating antichain in $\compal'$.

$\gamma^* = (\bar\gamma^*, \prec_{\gamma^*})$, following Eqs.~\ref{eq:al_approx_set} and \ref{eq:al_approx_prec}, is a valid alignment, \ie it has properties (1), (2) and (3) from Def.~\ref{def:alignment}.
\end{theorem}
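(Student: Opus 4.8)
The plan is to verify the defining properties of Def.~\ref{def:alignment} for $\gamma^*$ one at a time, exploiting that $\gamma^*$ is built by replacing each violating interval $\interval{A}{B} \in W(\compal'_V)$ of $\compal'$ by a locally recomputed alignment $\gamma_{AB}$ while leaving the surviving moves $\bar\compal \setminus \bar\compal'_V$ untouched, as prescribed by Eqs.~\ref{eq:al_approx_set} and \ref{eq:al_approx_prec}.

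\emph{Log projection (property~1).} For coverage I would use that the intervals of $W(\compal'_V)$ are the connected components of $\compal'_V$ and are therefore pairwise incomparable and non-overlapping, so every event of $L$ is accounted for exactly once: either its move survives in $\bar\compal \setminus \bar\compal'_V$, or it lies in a unique interval $\interval{A}{B}$ and is reproduced by $\gamma_{AB}$, which by construction is an alignment on $\interval{A}{B}\proj_L$. This yields $\overline{\gamma^*\proj_L} = \bar L$. For $\prec_L \subseteq \prec_{\gamma^*\proj_L}$ I would case-split any pair $e \prec_L e'$ into (i) both inside one interval, where the order is supplied by $\gamma_{AB}$; (ii) both outside every interval, where it is inherited through the second clause of Eq.~\ref{eq:al_approx_prec}; and (iii) straddling an interval boundary, where it is recovered from the endpoints $A,B$ together with the transitive closure.

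\emph{Firability (property~2).} Here the engine is Lemma~\ref{lem:antichain_marking_viol} together with the alignable-interval condition of Def.~\ref{def:align_interval}. At the entry of each interval the open prefix $\prefixopen{A}$ lies entirely in the non-violating region, so by the lemma the marking $m_A = \cutmarking(\prefixopen{A})$ is reachable; since $\interval{A}{B}$ is alignable, $\gamma_{AB}$ fires from $m_A$ to $m_B = \cutmarking(\prefix{B})$; and because $\cutmarking$ in Def.~\ref{def:cutmarking} records only the net token balance, substituting $\gamma_{AB}$ for $\interval{A}{B}$ leaves both boundary markings unchanged, so the surrounding moves continue to fire correctly from $m_B$ onward. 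Chaining these reachabilities along the poset gives $m_i \xrightarrow{*} m_f$.

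\emph{Main obstacle.} The delicate point is that $\gamma^*$ is a poset, so property~2 demands that \emph{every} linear extension $\sigma \in (\gamma^*\proj_T)^<$ --- not merely one --- be a firing sequence, whereas moves from distinct parallel intervals and from the surrounding non-violating region may interleave arbitrarily in $\sigma$. I would handle this by noting that the pseudo-marking at any cut is order-independent (Def.~\ref{def:cutmarking}), reducing firability of $\sigma$ to the absence of violating antichains in $\gamma^*$, and then showing no antichain of $\gamma^*$ is violating: inside each $\gamma_{AB}$ this holds because $\gamma_{AB}$ is a genuine alignment, and outside it follows from the non-violation of the retained region. The crux is to rule out a cut that mixes several intervals over-committing a shared resource instance $\idr_r \in \supp(\IdR)$; for this I would invoke the mutual-incomparability clause $g \|_{\compal'_V} g'$ in the definition of $W(\compal'_V)$, which confines each resource conflict to a single component and hence to a single alignable interval that already tolerates it.
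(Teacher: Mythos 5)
Your proposal is correct in substance and leans on the same two pillars as the paper --- Lemma~\ref{lem:antichain_marking_viol} and the alignability of the intervals in $W(\compal'_V)$ --- but it is organized quite differently. The paper argues by induction on $|W(\compal'_V)|$: it picks a minimal interval $\interval{A}{B} \in \min(W(\compal'_V))$, shows by contradiction (using the assumed existence of $f_{\compal'}$) that every maximal antichain $G \prec A$ satisfies $\neg\viol(G)$, so that $m_A = \cutmarking(\prefixopen{A})$ is reachable by Lemma~\ref{lem:antichain_marking_viol}, substitutes $\gamma_{AB}$ to obtain $m_i \xrightarrow{*} m_A \xrightarrow{*} m_B$ without violations in $\prefix{B}$, and recurses on the remaining intervals. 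You instead verify the properties of Def.~\ref{def:alignment} directly; your case analysis for the log projection is in fact more detailed than the paper's (which checks property~1 only in the base case and treats its preservation under substitution as immediate), and your explicit reduction of ``every linear extension fires'' to ``no antichain of $\gamma^*$ is violating'' surfaces a subtlety that the induction handles only implicitly. What the induction buys the paper is a clean justification that the entry marking of each interval is reachable; in your version that role is played by chaining reachabilities along the poset, which is fine. The one step you should tighten is the last: the mutual-incomparability clause in $W(\compal'_V)$ separates components only \emph{order-theoretically}, and two parallel realigned intervals may still claim the same resource instance while neither entry marking $m_A$ accounts for the other's claims, so excluding a mixed violating antichain needs an additional argument --- in the paper this burden is ultimately carried by the ILP construction (Lemmas~\ref{lem:const_vio} and~\ref{lem:interval_alignable}), which supplies a globally non-violating order $\matr{X}$. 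Since the paper's induction does not fully discharge this point at the level of the abstract $f_{\compal'}$ either, this is a refinement to make explicit rather than a fatal gap in your approach.
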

\begin{proof}
We prove that $\gamma^*$ is an alignment by induction on the size of $W(\compal'_V)$. For the base case with $|W(\compal'_V)|=0$, we have $\bar\gamma^* = \bar\compal$ and $\prec_{\gamma^*} = \prec_{\compal'}$. By definition, $\bar{\compal\proj_L} = \bar{L}$ and $\prec_L \subseteq \prec_{\compal\proj_L}$. Furthermore, since $|W(\compal'_V)|=0$, we know that for all $G \in \mathcal{A}^+(\compal')$, we have $\neg \viol(G)$, implying that $m_i \xrightarrow{\compal'} m_f$.

Let us assume that $\gamma^*$ is an alignment for $|W(\compal'_V)| = w$. We prove the statement for $W'(\compal'_V) = W(\compal'_V) \cup \set{\interval{A}{B}}$ with $|W'(\compal'_V)| = w + 1$ and $\interval{A}{B} \in \min(W'(\compal'_V))$. For every maximal antichain $G \in \mathcal{A}^+(\prefixopen{A})$ before $A$, \ie $G \prec A$, we have $\neg \viol(G)$, which we prove by contradiction. Assume $\viol(G)$, then by our assumption of the existence of $f_{\compal'}$, there is an alignable interval $\interval{A'}{B'} \subseteq \compal'$ with $A' \preceq G \preceq B'$, thus, by $G \prec A$, we have $\interval{A'}{B'} \prec \interval{A}{B}$, implying that $\interval{A}{B} \notin \min(W'(\compal'))$ which is a contradiction. By Lemma~\ref{lem:antichain_marking_viol} and the assumption that $f_{\compal'}(G)$ is an alignable interval, $m_i \xrightarrow{*} m_A \xrightarrow{*} m_B$ and $\interval{A}{B}$ can be  substituted by $\gamma_{AB}$ without violations in $\prefix{B}$, completing the proof. \qed
\end{proof}

\subsection{Obtaining minimal local alignable intervals}
We propose a method to find an antichain permutation of a composed alignment $\compal$ together with the intervals $W(\compal_V)$ such that all violations can be resolved by realigning these intervals as described in Sec.~\ref{sec:resolve_vio}. For computational efficiency, we choose to minimize the number of moves in the intervals that need to be realigned.

We formulate this as an Integer Linear Programming (ILP) problem. The objective of the ILP problem is to adjust the partial order of $\compal$, such that alignable intervals can be identified around violating antichains, preferring intervals with fewer moves. 

Let there be a (possibly arbitrary) fixed order in $\compal$ and $\IdR$ such that each element has a unique index, \ie for every $1 \leq i \leq n_\compal$, $\compal(i)$ and $(e(i), t_\mode(i))$ both denote the $i^{\text{th}}$ move in $\compal$, with $n_\compal = |\compal|$. Furthermore, for every $1 \leq j \leq n_r$, $\IdR(j)$ denotes the $j^{\text{th}}$ resource instance, with $n_r = |supp(\IdR)|$.

Let $\matr{R}$ be a $n_\compal \times n_\compal$ matrix, with $\matr{R}$ defined for every two indices $1 \leq i,j \leq n_\compal$ such that $\matr{R}_{ij}$ is a binary value denoting $(\compal(i), \compal(j)) \in \prec_{\compal}$. For each $\idc \in \Idc$, we introduce  the set $I_\idc$ of indices corresponding to moves in $\compal\proj_{\gamma_c}$. Furthermore, we use $[1..n] = \set{1,\dots,n}$ as an abbreviation for the set of all indices from 1 to $n$.

The set of minimal alignable intervals containing all violations, denoted by $W(\compal'_V)$, with $\compal'_V$ given by
\begin{equation}
    \compal'_V = \bigcup_{i,j \in [1..n
    _\compal]:\matr{X}_{ij} - \matr{R}_{ij}=1} \interval{\compal(j)}{\compal(i)}
\end{equation}
where $\matr{X}$ denotes the new partial order relation between alignment moves which respects the resources capacities and provides the solution to
\begin{equation}
    \text{Minimize } \sum_{i,j \in [1..n_\compal]} (1 - \matr{R}_{ij})\matr{R}_{ji}\matr{X}_{ij} + \epsilon \cdot (1 - \matr{R}_{ij})(1 - \matr{R}_{ji})\matr{X}_{ij}
\end{equation}
subject to
\begin{align}
    & \forall_{i,j \in [1..n_\compal]} & \matr{X}_{ij} & \in \set{0,1} \\
    %
    %
    %
    & \forall_{\idc \in \Idc} \forall_{i,j \in I_c} & \matr{X}_{ij} & = \matr{R}_{ij} \label{eq:const_c} \\
    & \forall_{i,j \in [1..n_\compal]} & \matr{R}_{ij} + (1 - \matr{X}_{ij}) - \matr{X}_{ji} & \leq 1 \label{eq:const_rev_rem} \\
    & \forall_{i,j,k \in [1..n_\compal]} & \matr{X}_{ij} + \matr{X}_{jk} - \matr{X}_{ik} & \leq 1 \label{eq:const_trans_clos} \\
    & \forall_{i \in [1..n_\compal]} & (1 - \matr{X}_{i \bullet}) \matr{C}^\clm - \matr{X}_{\bullet i}^T \matr{C}^\rls & \leq \vect{k} \label{eq:const_vio}
\end{align}
with $\matr{C}^\clm$ and $\matr{C}^\rls$ both $n_\compal \times n_r$ matrices counting how many resource instances are claimed and released respectively  for every alignment move. Both are defined for every $i \in [1..n_\compal]$ and $k \in [1..n_r]$ with $(e, t_\mode) = \compal(i)$ and $\idr_r = \IdR(k)$:
\begin{equation}
    \matr{C}^\clm_{ik} = \F(p_r,t)((\varepsilon, \mode^{{-1}}(\idr_r))) \text{ and }
    \matr{C}^\rls_{ik} = \F(t,p_r)((\varepsilon, \mode^{{-1}}(\idr_r)))
\end{equation}
and capacity vector $\vect{k}$ of length $n_r$, defined as $\vect{k}_k = |\IdR(k)|$ for every $k \in [1..n_r]$.

$\matr{X}$ provides the solution of a new partial order of moves in $\compal$ such that all violations are resolved and the least number of partial order relations is removed. For the running example, the additional arcs from the solution $\matr{X}$ are shown in red in Fig.~\ref{fig:re_compal}.

We refer to App.
\if \arxiv1
    \ref{app:ilp_correctness}
\else
    A in~\cite{sommers2023efficiently}
\fi
for the correctness proof of the ILP problem, where we show (1) the effectiveness of each constraint, (2) that there always exists a solution, (3) that the optimal solution has zero cost if and only if the composed alignment is not violating, and (4) that each interval obtained in $W(\compal'_V)$ is alignable.

\section{Conclusion}\label{sec:conclusion}
We have formulated the requirements for modeling and analyzing processes with inter-case dependencies and argued that our previously proposed Petri net extension named Resource Constrained $\nu$-Petri nets meets them. This paper continues on work presented in~\cite{sommers2022aligning}, where we showed that the traditional methods of aligning observed behavior with the modeled one fall short when dealing with coevolving cases, as they consider isolated cases only. The technique we present here aligns multiple cases simultaneously, exposing violations on inter-case dependencies. We developed and implemented an approximation technique based on a composition of individual alignments and local resolution of violations, which is an important advancement for the use of the technique in practice.

There can be ambiguity in the interpretation of the exposed violations, \eg was the activity executed but not recorded, executed by an ``incorrect'' resource instance, or not executed at all? In~\cite{sommers2022aligning}, we briefly touched upon relaxations of the synchronous product model as a means to improve the deviations' interpretability. One such relaxation helps to detect situations when a step required by the model was skipped in a process execution, and the resources needed for the step were not available at the time when it should have been executed. Adding ``resource-free'' model moves for  transitions allows to capture such deviations. Such special moves, when present in the alignment, reduce the ambiguity and provide a better explanation, \eg that the activity was not executed at all, rather than it might also have been executed but not recorded. For future work, we plan to extend and formalize the relaxations, and evaluate the insights obtained with the alignments based on a real-life case study.

\subsubsection{Acknowledgments.}
This work is done within the project ``Certification of production process quality through Artificial Intelligence (CERTIF-AI)'', funded by NWO (project number: \href{https://www.nwo.nl/projecten/17998}{17998}).

\bibliographystyle{plain}
\bibliography{references}

\newpage
\appendix

\if \arxiv1
    \section{Correctness of the ILP problem}\label{app:ilp_correctness}
We first show the effectiveness of each constraint:
\begin{itemize}
    \item Constraint~\ref{eq:const_c} ensures that the original partial order of the individual alignments is preserved. Note that this also ensures $g \nprec g$ for every $g \in \compal$;
    \item Const.~\ref{eq:const_rev_rem} enforces that when a relation $g \prec g'$ is removed, the opposite $g' \prec g$ is added:
    \begin{equation}
        \matr{R}_{ij} = 1 \wedge \matr{X}_{ij} = 0 \implies \matr{X}_{ij} = 1
    \end{equation}
    \item Const.~\ref{eq:const_trans_clos} enforces that the transitive closure is covered:
    \begin{equation}
        \matr{X}_{ij} = \matr{X}_{jk} = 1 \implies \matr{X}_{ik} = 1
    \end{equation}
    With Const.~\ref{eq:const_trans_clos} together with $g \nprec g$ for every $g \in \compal$ from Const.~\ref{eq:const_c}, there can be no loops in the solution of the ILP problem;
    \item Const.~\ref{eq:const_vio} enforces that the solution is not violating with regard to any resource instance capacities, which we show in Lemma~\ref{lem:const_vio}.
\end{itemize}
\begin{lemma}{(Constraint~\ref{eq:const_vio} ensures there are no violations)}\label{lem:const_vio}
With a composed alignment $\compal = \doublecup_{\idc \in \Idc} \gamma_c$ and $\matr{X}$ a solution to the ILP problem formulated above. The permutation $\compal' = (\bar\compal', \prec_{\compal'})$ of $\compal$ following the partial order of $\matr{X}$, \ie $\bar\compal' = \bar\compal$ and $\prec_{\compal'} = \set{ (\compal'(i), \compal'(j)) \mid i,j \in [1..n_\compal], \matr{X}_{ij} = 1}$, is not violating (cf. Def.~\ref{def:violations}).
\end{lemma}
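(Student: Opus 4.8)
The plan is to read off from Def.~\ref{def:violations} exactly what ``$\compal'$ is not violating'' demands: it suffices to exhibit one permutation in $\mathcal{S}(\compal')$ none of whose maximal antichains violate. Since $\compal'$ itself belongs to $\mathcal{S}(\compal')$ (the identity refinement, which is transitively closed and acyclic by Const.~\ref{eq:const_c}--\ref{eq:const_trans_clos}), I would take $\compal'$ as that witness and prove $\neg\viol(G)$ for every maximal antichain $G \in \mathcal{A}^+(\compal')$. Throughout, $\prec_{\compal'}$ is precisely the relation encoded by $\matr{X}$, \ie $\compal'(i) \prec_{\compal'} \compal'(j)$ iff $\matr{X}_{ij}=1$.

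First I would unfold $\neg\viol(G)$ for a fixed resource $\idr_r = \IdR(k)$. Expanding $\cutmarking(\prefixopen{G})(p_r)((\e,\idr_r))$ via Def.~\ref{def:cutmarking}, writing the per-move claims and releases as $\matr{C}^\clm_{jk} = \F(p_r,t)(\mode^{-1}(\idr_r))$ and $\matr{C}^\rls_{jk} = \F(t,p_r)(\mode^{-1}(\idr_r))$, using $m_i(p_r)((\e,\idr_r)) = |\IdR(k)| = \vect{k}_k$ (all instances of $\idr_r$ start available in $p_r$), and substituting $\prefix{G} = \prefixopen{G} \cup G$, the condition $\neg\viol(G)$ for resource $k$ becomes
\begin{equation}
    \sum_{j \in \prefix{G}} \matr{C}^\clm_{jk} - \sum_{j \in \prefixopen{G}} \matr{C}^\rls_{jk} \;\leq\; \vect{k}_k . \label{eq:plan_target}
\end{equation}

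The heart of the argument is to derive \eqref{eq:plan_target} from Const.~\ref{eq:const_vio}. I would pick any move $i \in G$ (possible since a maximal antichain of a non-empty poset is non-empty) and read the $i$-th instance of Const.~\ref{eq:const_vio} component-wise, which states for every $k$
\begin{equation}
    \sum_{j:\, i \nprec j} \matr{C}^\clm_{jk} - \sum_{j:\, j \prec i} \matr{C}^\rls_{jk} \;\leq\; \vect{k}_k , \label{eq:plan_constraint}
\end{equation}
since $(1-\matr{X}_{i\bullet})$ selects the moves $j$ with $i \nprec j$ and $\matr{X}_{\bullet i}$ the moves $j$ with $j \prec i$. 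Two inclusions, both consequences of $G$ being an antichain containing $i$, bridge \eqref{eq:plan_constraint} and \eqref{eq:plan_target}: first $\prefix{G} \subseteq \set{j : i \nprec j}$ (elements of $G$ are incomparable to $i$, and if $j \prec g$ for some $g \in G$ then $i \prec j$ would force $i \prec g$, impossible in an antichain), and second $\set{j : j \prec i} \subseteq \prefixopen{G}$ (any $j \prec i \in G$ lies strictly before $G$). As $\matr{C}^\clm$ and $\matr{C}^\rls$ are non-negative, enlarging the claim index set and shrinking the subtracted release index set can only increase the left-hand side, so the left-hand side of \eqref{eq:plan_target} is bounded above by that of \eqref{eq:plan_constraint}, which is $\leq \vect{k}_k$. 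This gives \eqref{eq:plan_target} for every $k$, hence $\neg\viol(G)$; as $G$ was arbitrary, $\compal'$ is not violating.

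I expect the main obstacle to be the bookkeeping that converts the move-indexed matrix inequality~\eqref{eq:plan_constraint} into the cut-indexed marking condition~\eqref{eq:plan_target}: one must verify that, for a single representative $i \in G$, the index set $\set{j : i \nprec j}$ over-approximates the closed prefix $\prefix{G}$ while $\set{j : j \prec i}$ under-approximates the open prefix $\prefixopen{G}$. These inclusions rest entirely on the antichain property of $G$ together with the acyclicity and transitivity of $\matr{X}$ guaranteed by the other constraints, so I would make those order-theoretic facts explicit before invoking non-negativity of the claim and release matrices.
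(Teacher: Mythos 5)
Your proof is correct and follows essentially the same route as the paper: both instantiate the $i$-th row of Const.~\ref{eq:const_vio} for a move $i$ in the antichain under consideration, identify $(1-\matr{X}_{i\bullet})$ and $\matr{X}^T_{i\bullet}$ with the closed and open prefixes of that antichain, and rewrite the resulting inequality into the form of the violation criterion of Def.~\ref{def:violations}. The only (beneficial) difference is that where the paper asserts exact equalities between these index sets and the prefixes of a canonical antichain $G_i$ built from all moves incomparable to $i$, you work with an arbitrary maximal antichain $G$ and a representative $i \in G$, replacing the equalities by the inclusions $\prefix{G} \subseteq \set{j : i \nprec j}$ and $\set{j : j \prec i} \subseteq \prefixopen{G}$ together with non-negativity of $\matr{C}^\clm$ and $\matr{C}^\rls$ --- a slightly more careful bookkeeping of the same argument.
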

\begin{proof}
First, let us rewrite Const.~\ref{eq:const_vio} to
\begin{equation}\label{eq:rewrite_const_vio}
    \forall_{i \in [1..n_\compal]}\forall_{k \in [1..n_r]}(1 - \matr{X}_{i\bullet}) \matr{C}^\clm_{\bullet k} - \matr{X}^T_{i\bullet} \matr{C}^\rls_{\bullet k} \leq \vect{k}_k = |\IdR(k)| 
\end{equation}

For ever move index $i \in [1..n_\compal]$, we can define $G_i = \set{ \compal(j) \mid j \in [1..n_\compal], \matr{X}_{ij} = \matr{X}_{ji} = 0 }$ to be the maximal antichain in $\matr{X}$ that contains $\compal(i)$. $(1 - \matr{X}_{ij})$ and $\matr{X}^T_{ij})$ relate to $G_i$ as follows:
\begin{equation}
    (1 - \matr{X}_{ij}) = 1 \iff \compal(j) \in \prefix{G_i} \text{ and }
    \matr{X}^T_{ij} = 1 \iff \compal(j) \in \prefixopen{G_i} \label{eq:X_prefix}
\end{equation}
We can now rewrite the Eq.~\ref{eq:rewrite_const_vio} to match the property in Eq.~\ref{eq:violation}, with abbreviations $C^\clm_{\idr_r}((e,t_\mode)) = \F(p_r, t)((\varepsilon, \mode^{-1}(\idr_r)))$ and $C^\clm_{\idr_r}((e,t_\mode)) = \F(t,p_r)((\varepsilon, \mode^{-1}(\idr_r)))$. Note that $\matr{C}^\clm_{jk} = C^\clm_{\IdR(k)}(\compal(j))$ and $\matr{C}^\rls_{jk} = C^\rls_{\IdR(k)}(\compal(j))$ for every $j \in [1..n_\compal]$ and $k \in [1..n_r]$. For every $i \in [1..n_\compal]$ and $k \in [1..n_r]$ with $\idr = \IdR(k)$, the following holds:
\begin{align}
    && (1 - \matr{X}_{i\bullet}) \matr{C}^\clm_{\bullet k} - \matr{X}^T_{i\bullet} \matr{C}^\rls_{\bullet k} &\leq |\idr| &\\
    &\iff & \sum_{j \in [1..n_\gamma]} \left( (1 - \matr{X}_{ij}) \matr{C}^\clm_{jk} - \matr{X}^T_{ij} \matr{C}^\rls_{jk} \right) &\leq |\idr| &\\
    &\iff & \sum_{g \in \prefix{G_i}} C^\clm_\idr(g) - \sum_{g \in \prefixopen{G}} C^\rls_\idr(g) &\leq |\idr| & \text{(By Eq.~\ref{eq:X_prefix})} \\
    &\iff & \sum_{g \in \prefixopen{G_i}} \left( C^\clm_\idr(g) - C^\rls_\idr(g) \right) + \sum_{g \in G} C^\clm_\idr(g) &\leq |\idr| & \\
    &\iff & |\idr| - \sum_{g \in \prefixopen{G_i}} \left( C^\clm_\idr(g) - C^\rls_\idr(g) \right) + \sum_{g \in G} C^\clm_\idr(g) & \geq 0 & \\
    &\iff & \cutmarking(\prefixopen{G})(p_r)((\varepsilon,\idr_r)) + \sum_{g \in G} C^\clm_\idr(g) & \geq 0 & \text{(By Def. \ref{def:cutmarking})}
\end{align}
\qed
\end{proof}

Next, we go over two important properties of the ILP problem. In Lemma~\ref{lem:ilp_solution_exists}, we show that a solution respecting the constraints always exists for any composed alignment $\compal$, and in Lemma~\ref{lem:cost0} we show that if and only if $\compal$ is not violating as defined in Def.~\ref{def:violations}, the cost of the solution is 0.
\begin{lemma}{(There always exists a solution to the ILP problem.)}\label{lem:ilp_solution_exists}
With a composed alignment $\compal = \doublecup_{\idc \in \Idc} \gamma_c$ and the ILP problem formulated as above, there exists a solution for $\matr{X}$ such that all constraints hold.
\end{lemma}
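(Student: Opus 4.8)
The plan is to exhibit an explicit feasible $\matr{X}$ rather than argue existence abstractly, by \emph{sequentializing} the individual alignments. Fix an arbitrary linear order $c_1,\dots,c_k$ of the cases in $\Idc$ and declare every move of $c_a$ to precede every move of $c_b$ whenever $a<b$, while leaving the original within-case order untouched. Concretely, for indices $i,j$ with $\compal(i)$ in case $c$ and $\compal(j)$ in case $c'$, I would set $\matr{X}_{ij}=\matr{R}_{ij}$ when $c=c'$, set $\matr{X}_{ij}=1$ when $c$ precedes $c'$ in the linearization, and $\matr{X}_{ij}=0$ when $c'$ precedes $c$. This is a total order on the case blocks refining into the per-case orders $\prec_{\gamma_c}$.

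First I would dispatch the structural constraints, which are mechanical. Constraint~\ref{eq:const_c} holds by definition, since $\matr{R}_{ij}$ is copied on same-case pairs. For Const.~\ref{eq:const_rev_rem}, the only way an original relation $\matr{R}_{ij}=1$ is dropped (i.e. $\matr{X}_{ij}=0$) is when it is an inter-case relation reversed by the linearization, in which case $\matr{X}_{ji}=1$ by construction; same-case relations are never dropped. Transitivity (Const.~\ref{eq:const_trans_clos}) follows because the relation is the lexicographic composition of a total order on blocks with transitive within-block orders, and acyclicity follows from $g \nprec g$ guaranteed by Const.~\ref{eq:const_c}; a short case distinction on the block memberships of $i,j,k$ settles it.

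The hard part will be Const.~\ref{eq:const_vio}, i.e. showing the sequential arrangement is non-violating. The key observation is that, because the cases are totally ordered against one another, every maximal antichain $G$ of the permutation lies entirely inside a single case $c$; there are no cross-case incomparable pairs. I would then evaluate $\cutmarking(\prefixopen{G})$ at a resource place $p_r$ and split the prefix $\prefixopen{G}$ into the fully completed earlier cases and the portion of $c$ preceding $G$. Here resource \emph{durability} is essential: by modeling restriction~2 of Def.~\ref{def:rc_nunet} we have $m_i(p_r)=m_f(p_r)$, and since each $\gamma_c$ is an alignment (Def.~\ref{def:alignment}) its firing returns every resource token $(\e,\idr_r)$ in $p_r$ to its initial count. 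Hence each completed case contributes net zero at $p_r$, so the pre-antichain marking at $G$ in the composed arrangement coincides with the pre-antichain marking of $G$ computed \emph{inside $\gamma_c$ alone}.

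Finally I would invoke the validity of the individual alignments: since $\gamma_c$ is an alignment, $\gamma_c\proj_T$ is a genuine firing sequence from $m_i$ to $m_f$, so $\cutmarking(\prefixopen{G})$ is reachable within $\gamma_c$, and by Lemma~\ref{lem:antichain_marking_viol} the prefix $\prefixopen{G}$ is non-violating there, i.e. $\cutmarking(\prefixopen{G})(p_r)((\e,\idr_r)) \geq \sum_{(e,t_\mode)\in G}\F(p_r,t)(\mode^{-1}((\e,\idr_r)))$ for all $\idr_r \in \supp(\IdR)$. Transporting this inequality back through the durability identity yields $\neg\viol(G)$ for every maximal antichain of the constructed permutation, which by Lemma~\ref{lem:const_vio} is exactly Const.~\ref{eq:const_vio}, establishing feasibility of $\matr{X}$. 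I expect no real difficulty in the structural constraints; the whole weight of the argument rests on the durability identity that lets completed cases be ignored at resource places.
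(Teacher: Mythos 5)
Your proposal is correct and is essentially the paper's own proof: the paper also exhibits the block upper-triangular witness $\matr{X}'$ that sequentializes the cases (diagonal blocks $\matr{R}_{c_x}$, ones above, zeros below), checks Constraints~\ref{eq:const_c}--\ref{eq:const_trans_clos} by the same case splits on block membership, and discharges Constraint~\ref{eq:const_vio} by the same observation that each completed earlier case contributes net zero at every resource place (since each $\gamma_{c_y}$ is a valid alignment returning all resources), reducing the check to the single non-violating alignment $\gamma_{c_x}$. The only cosmetic difference is that you phrase the resource argument via antichains, $\viol$, and Lemma~\ref{lem:antichain_marking_viol}, while the paper manipulates the matrix inequality directly.
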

\begin{proof}
We show by construction that there is always a solution $\matr{X}'$ to the ILP problem, which respects all constraints. Let there be a (possibly arbitrary) order in $\Idc$, such that $c_x$ denotes the $x^{\text{th}}$ case identifiers for every $x \in [1..n_c]$, with $n_c = |\Idc|$.
\begin{equation}
    \matr{X}' =
    \begin{bmatrix}
    \matr{R}_{c_1} & \matr{1} & \cdots & \matr{1}  \\
    \matr{0} & \matr{R}_{c_2} & \cdots & \matr{1}  \\
    \vdots & \vdots & \ddots & \vdots  \\
    \matr{0} & \matr{0} & \cdots & \matr{R}_{c_{n_c}}
    \end{bmatrix}
\end{equation}
with $\matr{R}_c = \matr{R}_{I_c I_c}$ for each $c \in \Idc$ the $|\gamma_c|\times|\gamma_c|$ submatrix containing only the elements of the case.
\begin{enumerate}
    \item[Const.~\ref{eq:const_c}] $\matr{X}'$ trivially respects this constraint as it contains $\matr{R}_c$ for each $c \in \Idc$;
    \item[Const.~\ref{eq:const_rev_rem}] Let $i \in I_{c_x}$ and $j \in I_{c_y}$ be two indices. If $x = y$, this trivially holds, since $\matr{X}'$ contains $\matr{R}_{c_x}$. Otherwise, by construction of $\matr{X}_{ij}$, $\matr{R}_{ij} = 1 \wedge \matr{X}_{ij} = 0 \implies j < i \implies \matr{X}_{ji} = 1$;
    \item[Const.~\ref{eq:const_trans_clos}] Let $i,j,k \in I_{c_x},I_{c_y},I_{c_z}$ respectively. When $x=y=z$, this trivially holds, since $\matr{X}'$ contains $\matr{R}_{c_x}$ which is transitively closed. Otherwise, we know by construction of $\matr{X}'$ that with $x \neq y$, $\matr{X}'_{ij} = 1 \iff x < y$ and $\matr{X}'_{ij} = 0 \iff x > y$ which we use two prove the two cases:
    \begin{enumerate}
        \item[(1)] $\matr{X}'_{ij} = 1 \wedge x \neq y \implies x < y$. $\matr{X}'_{jk} = 1 \implies y \leq z \implies x < z \implies \matr{X}'_{ik} = 1$;
        \item[(2)] $\matr{X}'_{jk} = 1 \wedge y \neq z \implies y < z$. $\matr{X}'_{ij} = 1 \implies x \leq y \implies x < z \implies \matr{X}'_{ik} = 1$.
    \end{enumerate}
    Hence $\matr{X}'$ is transitively closed;
    \item[Const.~\ref{eq:const_vio}] For every $x \in [1..n_c]$ and every $i \in I_{c_x}$, let $J_i = \set{ j \mid j \in [1..n_\compal], 1 - \matr{X}'_{ij} = 1 }$ and $J_i' = \set{ j \mid j \in [1..n_\compal], \matr{X}'^T_{ij} = 1 }$, note that $J' \subseteq J$.
    We know by construction of $\matr{X}'$ that for every $y \in [1..n_c]$ and every $j \in I_{c_y}$, $1 - \matr{X}'_{ij} = 1 \implies y \leq x$ and $\matr{X}'^T_{ij} = 1 \implies y \leq x$. Therefore
    \begin{equation}
        \forall_{y \in [1..n_c] \setminus \set{x}} I_{c_y} \subseteq J' \vee I_{c_y} \cap J' = \emptyset
    \end{equation}
    holds. Since $\gamma_{c_y}$ is an alignment for every $y \in [1..n_c]$, we have for every resource index $k \in [1..n_r]$, $\sum_{j \in I_{c_y}}(\matr{C}^\clm_{jk} - \matr{C}^\rls_{jk}) = 0$ which together imply that
    \begin{equation}
    \begin{gathered}
        \sum_{j \in J} \matr{C}^\clm_{jk} - \sum_{j \in J'} \matr{C}^\rls_{jk} =
        \sum_{y \in [1..n_c]} \left(\sum_{j \in J \cap I_{c_y}} \matr{C}^\clm_{jk} - \sum_{j \in J' \cap I_{c_y}} \matr{C}^\rls_{jk} \right) \\
        = \sum_{j \in J \cap I_{c_x}} \matr{C}^\clm_{jk} - \sum_{j \in J' \cap I_{c_x}} \matr{C}^\rls_{jk} \leq |\IdR(k)|
    \end{gathered}
    \end{equation}
    since $\gamma_{c_x}$ is a non-violating alignment.
\end{enumerate}
\qed
\end{proof}

\begin{lemma}{(Cost=0 if and only if the composed alignment is not violating.)}\label{lem:cost0}
With a composed alignment $\compal = \doublecup_{\idc \in \Idc} \gamma_c$, the solution to the ILP problem formulated above has zero cost if and only if $\compal$ is not violating as defined in Def.~\ref{def:violations}, \ie Eq.~\ref{eq:violation} does not hold for $\compal$.
\end{lemma}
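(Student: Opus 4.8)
The plan is to prove both directions of the biconditional by translating ``zero cost'' into a structural statement about the order $\matr{X}$ returned by the ILP and then invoking Lemma~\ref{lem:const_vio}. First I would note that every term of the objective is a product of $\{0,1\}$-valued quantities (the second summand additionally weighted by the positive $\epsilon$), so the objective is nonnegative and its minimum is attained, a feasible point existing by Lemma~\ref{lem:ilp_solution_exists}. The primary summand $(1-\matr{R}_{ij})\matr{R}_{ji}\matr{X}_{ij}$ counts exactly the \emph{reversed} relations: pairs with $j \prec_\compal i$ for which $\matr{X}$ instead asserts $i \prec_{\compal'} j$. Using Const.~\ref{eq:const_rev_rem} I would show that this summand vanishes precisely when no original relation is removed, i.e. when $\prec_\compal \subseteq \prec_{\compal'}$, so that the order $\compal'$ encoded by $\matr{X}$ is a genuine refinement and hence lies in $\mathcal{S}(\compal)$ (acyclicity and transitivity being supplied by Const.~\ref{eq:const_c} and Const.~\ref{eq:const_trans_clos}). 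Thus ``cost $0$'' is equivalent to ``the optimum is realised by some $\compal' \in \mathcal{S}(\compal)$''.

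For $(\Leftarrow)$, assume $\compal$ is not violating. By Def.~\ref{def:violations} there is a permutation $\compal^* \in \mathcal{S}(\compal)$ none of whose maximal antichains satisfies $\viol$; I would take the matrix $\matr{X}^*$ encoding $\compal^*$ and verify feasibility: Const.~\ref{eq:const_c}, \ref{eq:const_rev_rem} and \ref{eq:const_trans_clos} hold because $\compal^*$ only adds relations to the transitively closed, acyclic $\prec_\compal$, and Const.~\ref{eq:const_vio} holds by the equivalence established inside the proof of Lemma~\ref{lem:const_vio} (its chain of iff's shows that Const.~\ref{eq:const_vio} for an index $i$ is exactly $\neg\viol(G_i)$ for the maximal antichain $G_i$ containing $\compal(i)$). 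Since $\compal^*$ is a refinement, $\matr{X}^*$ reverses nothing and so has primary cost $0$; by optimality the ILP cost is $0$. For $(\Rightarrow)$, assume the optimal cost is $0$. Then the optimal $\matr{X}$ reverses no relation, so by the first paragraph $\prec_\compal \subseteq \prec_{\compal'}$ and $\compal' \in \mathcal{S}(\compal)$; Lemma~\ref{lem:const_vio} then gives that $\compal'$ has no violating maximal antichain, which is precisely a witness that $\compal$ is not violating in the sense of Def.~\ref{def:violations}.

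I expect two points to require care. The first is the exact reading of ``zero cost'': the secondary $\epsilon$-term charges for relations \emph{added} between originally incomparable moves, and a non-violating refinement may genuinely need such additions, so I would treat $\epsilon$ as the infinitesimal secondary weight it is intended to be and interpret ``zero cost'' as ``no original relation removed'' (primary summand zero); under a strict total-cost reading the two directions cannot both hold. The second, and the main obstacle, is reconciling the feasible region of the ILP with $\mathcal{S}(\compal)$ from Def.~\ref{def:violations}: $\mathcal{S}(\compal)$ permits adding relations \emph{within} a case, whereas Const.~\ref{eq:const_c} freezes the intra-case order. For $(\Leftarrow)$ I therefore need a non-violating witness that uses no intra-case additions; this should follow because $\viol$ inspects only the resource places and each $\gamma_c$ is individually a valid alignment, so intra-case refinements are never required to remove an inter-case conflict — but I would state and justify this reduction explicitly, since it is exactly the place where the ILP formulation and Def.~\ref{def:violations} must be shown to agree.
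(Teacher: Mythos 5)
Your proof follows the same route as the paper's own: zero primary cost combined with Const.~\ref{eq:const_rev_rem} forces $\prec_{\compal} \subseteq \prec_{\compal'}$ so that the order encoded by $\matr{X}$ lies in $\mathcal{S}(\compal)$, Lemma~\ref{lem:const_vio} supplies the equivalence between Const.~\ref{eq:const_vio} and non-violation, and the converse direction encodes a non-violating witness permutation as a feasible zero-cost point. The two caveats you flag --- the $\epsilon$-term charging for relations added between originally incomparable moves, and Const.~\ref{eq:const_c} freezing the intra-case order while $\mathcal{S}(\compal)$ permits intra-case refinements --- are genuine and are silently passed over in the paper's proof, so your explicit treatment of them is a refinement of, not a departure from, the published argument.
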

\begin{proof}
We prove the two sides of the bi-implication:
\begin{itemize}
    \item[$(\implies)$] Assuming the cost is zero, we have for every $i,j \in [1..n_\compal]$, $\matr{R}_{ij} = 1 \implies \matr{X}_{ij} = 1$ by the objective function, and therefore $\set{ (\compal(i), \compal(j)) \mid i,j \in [1..n_\compal],\matr{R}_{ij} = 1} = \prec_\matr{R} \subseteq \prec_\matr{X} = \set{ (\compal(i), \compal(j)) \mid i,j \in [1..n_\compal],\matr{X}_{ij} = 1}$. Since $\matr{X}$ respects Const.~\ref{eq:const_vio}, $\matr{R}$ is not violating either;
    \item[$(\impliedby)$] Assuming $\compal$ is not violating, there exists a $\compal' \in \mathcal{S}(\compal)$, such that the resource capacities are respected. Any $\matr{X}$ such that $\prec_\compal \subseteq \prec_{\compal'} \subseteq \prec_\matr{X}$ results in zero cost, since for every $i,j \in [1..n_\compal]$, $\matr{R}_{ij} = 1 \implies \matr{X}_{ij} = 1$. Furthermore, the cost can not be negative as $(1 - \matr{R}_{ij}) \matr{X}_{ij} \geq 1$ for every $i,j\in [1..n_\compal]$.
\end{itemize}
\qed
\end{proof}

Lastly, to fulfill the assumption in Sec.~\ref{sec:approximation} that the intervals around violating antichains are alignable, we show in Lemma~\ref{lem:interval_alignable} that this is the case for each interval in $W(\compal'_V)$.
\begin{lemma}{(Each interval $\interval{A}{B} \in W(\compal'_V)$ is alignable)}\label{lem:interval_alignable}
Let $\compal = \doublecup_{\idc \in \Idc} \gamma_c$ be a composed alignment and $W(\compal'_V)$ with $\compal'_V = \cup_{i,j\in[1..n_\compal], \matr{X}_{ij} - \matr{R}_{ij} = 1 } \interval{\compal(j)}{\compal(i)}$ the set of intervals obtained from the ILP problem formulated above with solution $\matr{X}$. For every $\interval{A}{B} \in W(\compal'_V)$, $\interval{A}{B}$ is alignable, \ie (by Def.~\ref{def:align_interval}) $m_A \xrightarrow{*} m_B$ with $m_A = \cutmarking(\prefixopen{A})$ and $m_B = \cutmarking(\prefix{B})$.
\end{lemma}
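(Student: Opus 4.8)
The plan is to leverage the two facts already established for the ILP solution: that the permutation $\compal'$ induced by $\matr{X}$ has no violating maximal antichain (Lemma~\ref{lem:const_vio}), and that, for composed alignments, non-violation of a prefix is equivalent to reachability of the corresponding antichain marking (Lemma~\ref{lem:antichain_marking_viol}). First I would fix an interval $\interval{A}{B} \in W(\compal'_V)$, so that $A = \min(\gamma_v)$ and $B = \max(\gamma_v)$ for the comparability-component $\gamma_v$ of $\compal'_V$ giving rise to it. Since every minimal element of $\gamma_v$ lies below some maximal element, $A$ and $B$ are antichains of $\compal'$ with $A \preceq B$, whence the open prefix $\prefixopen{A}$ and the prefix $\prefix{B}$ are nested order ideals, $\prefixopen{A} \subseteq \prefix{B}$, of $\compal'$.

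Next I would produce the firing witnessing alignability. Because $\compal'$ has no violating maximal antichain, it is non-violating in the sense of Def.~\ref{def:violations} (it is its own witness in $\mathcal{S}(\compal')$), so by the $(\impliedby)$ direction of Lemma~\ref{lem:antichain_marking_viol} applied to the full prefix there is a firing sequence respecting $\prec_{\compal'}$ that takes $m_i$ to $m_f$. The key step is to route such a sequence through both cuts: since $\prefixopen{A}$ and $\prefix{B}$ are nested order ideals, I can pick a linear extension $\sigma = \sigma_1\sigma_2\sigma_3$ of $\compal'\proj_T$ in which $\sigma_1$ fires exactly the transitions of $\prefixopen{A}$ and $\sigma_1\sigma_2$ fires exactly those of $\prefix{B}$. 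If this front-loaded extension is a legal firing, then $m_i \xrightarrow{\sigma_1} \cutmarking(\prefixopen{A}) = m_A \xrightarrow{\sigma_2} \cutmarking(\prefix{B}) = m_B$, so $m_A \xrightarrow{*} m_B$; together with reachability of $m_A$ (Lemma~\ref{lem:antichain_marking_viol}), Def.~\ref{def:align_interval} then yields that $\interval{A}{B}$ is alignable.

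The main obstacle is exactly the legality of this particular front-loaded extension: non-violation is phrased over maximal antichains (cuts), whereas legality of $\sigma$ requires that every intermediate order ideal on its chain carry a non-negative marking. I would bridge the two using the resource-conservation invariant of RC \nunet s, namely the place invariant $(1,1)$ on each pair $(p_r,\barp_r)$ established when proving that RC \nunet s satisfy R1--R5: under this invariant the availability $\cutmarking(D)(p_r)((\e,\idr_r))$ equals the capacity minus the number of instances of $\idr_r$ concurrently claimed within the order ideal $D$, so a negative marking at some $D$ would force a set of mutually claiming moves whose joint demand exceeds availability. Localizing this over-subscription to a maximal antichain $G \in \mathcal{A}^+(\compal')$ then contradicts $\neg\viol(G)$ (equivalently, contradicts Constraint~\ref{eq:const_vio} via Lemma~\ref{lem:const_vio}), showing that no intermediate ideal can be over-subscribed and hence that $\sigma$ is legal. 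This localization — turning an over-subscribed ideal into a violating cut — is the delicate part, and is precisely the converse flavour of Theorem~\ref{the:violating_not_firable}.
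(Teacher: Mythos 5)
Your overall strategy is close in spirit to the paper's, but the step you yourself flag as delicate --- turning an over-subscribed order ideal into a violating maximal antichain --- has a genuine gap as you describe it. You say a negative (or insufficient) marking at an intermediate ideal $D$ ``would force a set of mutually claiming moves whose joint demand exceeds availability,'' which you then want to localize to a cut $G$ with $\viol(G)$. But the moves holding outstanding claims of $\idr_r$ at an ideal $D$ need \emph{not} be pairwise incomparable: a move $h_1$ can claim a unit and, while that unit is still held, a comparable move $h_2 \succ h_1$ (from another case) can claim another unit, so the over-subscribing claimers can form a chain and do not sit inside any single antichain. The converse-of-Theorem~\ref{the:violating_not_firable} flavour you invoke therefore does not follow from the antichain-based $\viol$ predicate alone. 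What rescues the argument is the \emph{specific asymmetric form} of Constraint~\ref{eq:const_vio}: for the next-to-fire move $g$ it bounds all claims in $\prefix{G_g}$ (including the cut $G_g$ of moves incomparable to $g$) minus only the releases in $\prefixopen{G_g}$. One then checks that every claim outstanding at $D$ is either below $g$ with its release provably outside $\prefixopen{G_g}$ (since the release is not in $D \supseteq \prefixopen{\set{g}}$ and cannot lie strictly below any element of $G_g$), or incomparable to $g$ and hence counted in $\sum_{G_g} C^\clm$; adding $C^\clm(g)$ itself shows the constraint's left-hand side dominates the demand at $D$. That bookkeeping is the actual content of the step and is absent from your sketch. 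You also never discuss enabledness on the production places, which needs the (easy) remark that Constraint~\ref{eq:const_c} preserves each $\prec_{\gamma_c}$ and case tokens do not interact across cases.

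For comparison, the paper avoids reasoning about arbitrary linear extensions of the whole of $\compal'$ altogether: it explicitly \emph{constructs} the witness subalignment $\gamma_{AB}$ on the interval by splitting every synchronous move of $\interval{A}{B}$ into a separate model move and log move, ordering the model moves by $\matr{X}$ and the log moves by $\prec_L$, and then argues firability from $m_A$ to $m_B$ from two facts: no antichain of $\gamma_{AB}$ is violating (Lemma~\ref{lem:const_vio}), and each per-case projection of $\gamma_{AB}$ coincides with that of $\interval{A}{B}$ and is itself an alignment. This buys a concrete realignment object to substitute into $\gamma^*$ and sidesteps your routing argument, though it leans on the same (also only sketched) bridge from ``no violating cut'' to ``some linearization fires.'' Your approach is more direct for establishing $m_A \xrightarrow{*} m_B$ as Def.~\ref{def:align_interval} literally asks, but it stands or falls on the localization lemma above, which you should state and prove explicitly.
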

\begin{proof}
Let $\interval{A}{B} \in W(\compal'_V)$ be any interval in $W(\compal'_V)$. We prove that $m_A \xrightarrow{*} m_B$ by construction of a subalignment $\gamma_{AB} = (\bar\gamma_{AB}, \prec_{\gamma_{AB}})$ constructed as follows:
\begin{align}
    \bar\gamma_{AB} &= \bigcup_{(e, t_\mode) \in \interval{A}{B} \cap \Gamma_s} \sset{ (\gg, t_\mode), (e, \gg) } \\ 
    \prec_{\gamma_{AB}} &= ( \sset{ \spair{(\gg, t_\mode(i))}{(\gg, t_\mode(j))} \mid i,j \in [1..n_\gamma], \gamma(i), \gamma(j) \in \interval{A}{B} \setminus \Gamma_l, \matr{X}_{ij} = 1 } \\
     &~\cup \set{((e, t_\mode), (e', t'_\mode)) \mid (e, t_\mode), (e', t'_\mode)) \in \prec_{\gamma'_L}, (e, \gg),(e',\gg) \in \gamma_{AB}} )^+\nonumber
\end{align}
\ie all synchronous moves are split into a corresponding model and log move and the partial order for model moves is defined by $\matr{X}$ and for the log moves by the event log's partial order $\prec_L$.
There is a $\sigma \in N(\gamma_{AB})^*$ such that $m_A \xrightarrow{\sigma} m_B$ because of two properties of $\gamma_{AB}$:
\begin{itemize}
    \item By Const.~\ref{eq:const_vio} and Lemma~\ref{lem:const_vio}, $\forall_{G \in \mathcal{A}^+(\gamma_{AB}), \idr \in supp(\IdR)} \neg\viol(G)$;
    \item For every $c \in \Idc$, we have, by construction of $\gamma_{AB}$ and the ILP's constraints, $\gamma_{AB}\proj_{T_c} = \interval{A}{B}\proj_{T_c}$, with $T_c = \set{t_\mode \mid t_\mode \in T_\mode, \exists_{v_c \in Var_c, v_r \in Var_r)} \mode((v_c,v_r))=(c,*) }$. Therefore, $\gamma_{AB}\proj_c = \gamma_{AB} \cap \Gamma_c$ is an alignment, with $\Gamma_c = \set{(e,t_\mode) \mid (e,t_\mode) \in \gamma_{AB}, \ecase(e) = c \vee t_\mode \in T_c}$.
\end{itemize}
\qed
\end{proof}
\fi

\end{document}